\documentclass[a4paper,twocolumn,11pt,unpublished]{quantumarticle}

\pdfoutput=1
\usepackage{amsmath,amssymb,amsthm,mathtools}
\usepackage{physics}
\usepackage{hyperref}
\usepackage{cleveref}
\usepackage{braket}
\usepackage{graphicx}
\usepackage{subcaption}
\usepackage{tikz}
\usetikzlibrary{arrows.meta,positioning,fit}
\numberwithin{equation}{section}

\makeatletter
\long\def\addtocontents#1#2{%
  \protected@write\@auxout
    {\robustify@contents\let\glossary\@gobble}%
    {\string\@writefile{#1}{#2}}%
}
\makeatother
\usepackage[
  backend=bibtex,
  style=alphabetic,
  doi=true,
  url=true,
  eprint=false,
]{biblatex}
\AtEveryBibitem{\clearfield{note}}
\newcommand{\poly}{\text{poly}}

\newtheorem{theorem}{Theorem}
\newtheorem{definition}{Definition}
\newtheorem{proposition}{Proposition}

\newtheorem{lemma}{Lemma}

\title{Bosonic Encodings for Hermite-Galerkin Discretizations of High-Dimensional PDEs and Bayesian Inverse Problems}
\author{Alice Barthe}
\affiliation{PsiQuantum, 700 Hansen Way, Palo Alto, CA 94304, USA}
\date{August 2026}

\begin{document}

\begin{abstract}
The Koopman-von Neumann framework has been proposed to design quantum algorithms for non-linear dynamics. 
It maps a non-linear ordinary differential equation to a linear partial differential equation (PDE) governing a probability amplitude. 
Previous works represents this amplitude in the Hermite-function basis, equivalently as a bosonic state, and truncates the total Hermite degree to obtain a representation over $\Theta(m\log N)$ qubits, where $N$ is the number of variables and $m$ the truncation order. 
We extend this approach to a broader class of linear PDEs whose differential operators have a structured polynomial form. 
We prove convergence of the truncation for both time-dependent dynamics and gapped ground-state problems under explicit regularity and stability assumptions. 
We then introduce a qubit encoding that supports efficient block encodings of the truncated operators. 
Finally, we apply the framework to Bayesian inverse problems with Gaussian priors and observation noise, reducing posterior-state preparation to the preparation of a structured Hamiltonian's ground state.
\end{abstract}

\maketitle

\section{Introduction}

Quantum algorithms offer a promising route toward accelerating the numerical solution of differential equations, with theoretical speed-ups established for certain classes of linear ordinary differential equations (ODEs) and partial differential equations (PDEs) under suitable sparsity, conditioning, and state-preparation assumptions~\cite{berry2017quantum-223}.
In this work, we consider linear PDEs in a large number of variables. 
Although physical space is typically low-dimensional, a PDE may instead describe a function of the full state of a system. 
For example, the Liouville equation describes the evolution of a probability density under deterministic dynamics, while the Fokker--Planck equation additionally accounts for stochastic noise. 
For a system with many degrees of freedom, these probability densities depend on a correspondingly large number of variables.

Our first contribution, presented in \Cref{sec:global-hermite-galerkin}, is to extend an existing framework developed for specific PDEs such as the Liouville equation~\cite{engel2021linear-e8c,tanaka2023polynomial-272} and the backward Kolmogorov equation~\cite{bravyi2025quantum-101} to a broader family of linear PDEs.
Hermite functions are the eigenfunctions of the quantum harmonic oscillator, and they can equivalently be interpreted as bosonic occupation states~\cite{kowalski1997nonlinear-d40}. 
We expand solutions to the PDE in this infinite basis, and truncate by retaining only terms whose total Hermite degree is at most $m$, an approach known as global truncation as proposed in~\cite{engel2021linear-e8c}. 
This truncation strategy, further explained in \Cref{subsec:background}, is key to enabling quantum advantage, with a state that can be represented using $O(m\log N)$ qubits.
Under explicit regularity and stability assumptions, we prove convergence of this truncation for both time-dependent evolution and steady-state solutions, which in our framework correspond to static ground-state problems.

Our second contribution, presented in \Cref{sec:qubit-encoding}, is a qubit encoding of the globally truncated space. 
This space can be viewed either as bosonic states containing at most $m$ bosons distributed among $N$ modes, or as functions of $N$ variables expanded in the Hermite basis with total Hermite degree at most $m$. 
We encode it using $m$ registers, each storing either a mode label $j\in\{1,\ldots,N\}$ or an empty symbol indicating that fewer than $m$ bosons are present. 
Because bosons are indistinguishable, the ordering of the registers cannot carry physical information, so the encoded state is invariant under permutations of the registers. 
Each register requires $\lceil\log_2(N+1)\rceil$ qubits. 
Using this representation, we construct modular block encodings of bosonic creation and annihilation operators and combine them to encode structured polynomial differential operators with gate complexity polynomial in $m$ and $\log N$.

Our final contribution, presented in \Cref{sec:fokker-planck}, is to combine the standard transformation of a reversible Fokker--Planck operator into a Hermitian ground-state problem~\cite[Sec.~6.3]{risken1996fokker-planck-aa7} with the Hermite truncation and qubit encoding developed in this work. 
For Bayesian inverse problems with Gaussian priors and Gaussian observation noise, whitening the parameters yields a $\|x\|^2_2$ term in the posterior potential. 
The corresponding Hamiltonian can then be written as the bosonic number operator plus a data-dependent perturbation. 
We give sufficient conditions on the forward model under which this perturbation has a structured polynomial form and can be block encoded efficiently. 
We then derive the resulting ground-state-preparation complexity in terms of the truncation error, block-encoding normalization, spectral gap, and initial-state overlap, and discuss when these requirements lead to a tractable quantum algorithm.

\section{Global Hermite Truncation: Setup and Convergence}
\label{sec:global-hermite-galerkin}

In this section, we recall how Hermite expansions have been used to obtain finite-dimensional approximations of the linear PDEs arising from non-linear dynamical systems. 
We extend this approach to a broader class of high-dimensional linear PDEs. 
Our contribution in this section is to prove convergence of the global Hermite truncation for both time-dependent evolution and ground-state problems under explicit regularity and stability assumptions, and to quantify the dependence of the error on the truncation order.

\subsection{Relation to prior work}
\label{subsec:prior}

The approach considered here originates from methods that replace non-linear ordinary differential equations by linear equations on a larger function space. 
Consider the non-linear system
\begin{equation}
\dot{x}=f(x),
\qquad
x\in\mathbb R^N.
\end{equation}
A family of such linearisation methods was introduced in~\cite{kowalski1997nonlinear-d40}. 
This family includes the Koopman--von Neumann framework, on which the present work builds.

Rather than following a single trajectory $x(t)$, the Koopman--von Neumann framework considers a probability distribution over possible states. 
If every point evolves according to the original ODE, its probability density $\rho(t,x)$ evolves according to the linear Liouville equation. 
Equivalently, the probability amplitude
\begin{equation}
\psi(t,x):=\sqrt{\rho(t,x)}
\end{equation}
satisfies a linear evolution equation generated by the Koopman--von Neumann Hamiltonian. 
Although the original system has only $N$ variables, this linear equation acts on an infinite-dimensional space of functions of those variables.

A quantum algorithm therefore requires a finite-dimensional approximation. 
One possibility is to discretize each variable on a grid and represent the solution by its values on that grid, as proposed in~\cite{joseph2020koopmanvon-960}. 
This approach comes with convergence guarantees and block encodings, but its qubit requirement is linear in $N$ and therefore does not provide an asymptotic space advantage over classical representations.

An alternative proposed in~\cite{engel2021linear-e8c} is to expand the solution in Hermite functions and retain only basis functions whose total Hermite degree is at most $m$. 
We call this approach global truncation and define it precisely in \Cref{subsec:background}. 
The resulting space can be represented using $O(m\log N)$ qubits. The analysis of~\cite{engel2021linear-e8c} shows improved convergence as the strength of the non-linearity decreases, but does not give a general error bound for the stronger non-linearities considered here. 
Later work~\cite{tanaka2023polynomial-272} refines the quantum-algorithm complexity, but does not fully establish a global-truncation error bound in the regimes studied in this work.

Bravyi et al.~\cite{bravyi2025quantum-101} and subsequent work~\cite{bravyi2026quantum-0f7} extend this line of research to noisy non-linear ODEs. For a restricted family of equations, they prove both convergence and efficient block encoding. Their convergence mechanism relies on noise and a confining linear term that suppress components of high Hermite degree. 
Among the preceding works, the present paper is most directly inspired by Bravyi et al., whose use of global Hermite truncation and noise-induced suppression of high Hermite degrees provides the closest conceptual starting point for our analysis.
We examine the precise relationship between their setting and ours in \Cref{subsec:general-fokker-planck}.

Finally, recent work~\cite{wu2026from-49e} introduces a mapping from non-linear stochastic differential equations to Lindbladian dynamics. 
For each Brownian realization, they represent the corresponding conditional density by an $L^2$-normalized half-density whose stochastic unitary evolution, after averaging the associated projectors, yields a Lindblad equation. 
The diagonal of the resulting density operator reproduces the Fokker--Planck density. 

\subsection{Background: Hermite--Galerkin expansion}
\label{subsec:background}
In this subsection, we recall the existing concepts we need to deploy our framework. 
Let $\mathcal{L}$ be a linear differential operator on $\mathbb R^N$. 
We consider the linear PDE
\begin{equation}
\label{eq:linear-pde-general}
    \partial_t u(t,x)=\mathcal{L}u(t,x), \quad x \in \mathbb{R}^N.
\end{equation}
We expand the solution $u$ on the Hermite function basis $h_\alpha$, which is also called a Galerkin expansion, as follows
$$
    u(t,x)=\sum_\alpha c_\alpha(t)h_\alpha(x).
$$
Let $\{\text{He}_k\}_{k\geq 0}$ be the one-dimensional Hermite polynomials in the physicists' convention.
The normalized one-dimensional Hermite functions are
$$
h_k(x)
=
\frac{1}{\sqrt{2^k k!\sqrt{\pi}}}\text{He}_k(x)e^{-x^2/2}.
$$
For a multi-index $\alpha=(\alpha_1,\ldots,\alpha_N)\in\mathbb N^N$, the $N$-dimensional Hermite function is the product:
\begin{equation}
\label{eq:hermite-functions-definition}
    h_\alpha(x)
    =
    \prod_{j=1}^N h_{\alpha_j}(x_j).
\end{equation}
The family $(h_\alpha)_{\alpha\in\mathbb N^N}$ is an orthonormal basis of square integrable functions on unbounded domain $L^2(\mathbb R^N,dx)$.
We project the dynamics onto the basis which transforms the linear PDE into an infinite-dimensional linear ODE
\begin{equation}
\label{eq:coefficient-system-general}
\dot c_\alpha(t)
=
\sum_\beta L_{\alpha\beta}c_\beta(t),
\quad
L_{\alpha\beta}
:=
\langle h_\alpha,Lh_\beta\rangle_{L^2(\mathbb R^N)}.
\end{equation} 

Hermite functions are the eigenfunctions of the quantum harmonic oscillator. 
In one dimension, $h_k$ is the eigenfunction with excitation number $k$. 
In $N$ dimensions, the product function $h_\alpha$ describes $N$ independent oscillators with excitation numbers $\alpha_1,\ldots,\alpha_N$. 
This basis can equivalently be written as the occupation-number basis, also called the Fock basis
\begin{equation}
\ket{\alpha}
=
\ket{\alpha_1,\ldots,\alpha_N},\quad\alpha\in \mathbb{N}^N
\end{equation}
where $\alpha_j$ is the number of excitations in mode $j$~\cite{kowalski1997nonlinear-d40}.

In this representation, multiplication by the coordinate $x_j$ and differentiation with respect to $x_j$ correspond to the position and momentum operators of mode $j$:
\begin{equation}
\hat q_j=x_j,
\qquad
\hat p_j=-i\partial_{x_j}.
\end{equation}
More details can be found in the continuous-variable quantum-computing literature~\cite{lloyd1999quantum-d23,buck2021continuous-e94}.
The corresponding annihilation and creation operators are
\begin{equation}
\label{eq:ladder-operators}
\hat a_j
:=
\frac{1}{\sqrt 2}(\hat q_j+i\hat p_j),
\qquad
\hat a_j^\dagger
:=
\frac{1}{\sqrt 2}(\hat q_j-i\hat p_j).
\end{equation}
They respectively decrease and increase the excitation number of mode $j$:
\begin{equation}
\label{eq:ladder-action-hermite}
\hat a_j h_\alpha
=
\sqrt{\alpha_j}\,h_{\alpha-e_j},
\qquad
\hat a_j^\dagger h_\alpha
=
\sqrt{\alpha_j+1}\,h_{\alpha+e_j},
\end{equation}
where $e_j$ has value one in position $j$ and zero elsewhere.
The total Hermite degree
\begin{equation}
|\alpha|:=\sum_{j=1}^N\alpha_j
\end{equation}
is therefore the total number of excitations. 
It is measured by the number operators
\begin{equation}
\hat n_j:=\hat a_j^\dagger\hat a_j,
\qquad
\hat n:=\sum_{j=1}^N\hat n_j.
\end{equation}
With the conventions above,
\begin{equation}
\label{eq:number-operator-oscillator}
2\hat n
=
-\Delta+\|\hat q\|^2-N.
\end{equation}

Consequently, a polynomial differential operator can be written using position and momentum operators:
\begin{align}
    \mathcal{L} = &P(x_1,\cdots,x_N,\partial_{x_1},\cdots,\partial_{x_N})\\
    \label{eq:L(q,p)}
    \leftrightarrow &P(\hat q_1,\cdots,\hat q_N,i\hat p_1,\cdots,i\hat p_N)
\end{align}

The system in \Cref{eq:coefficient-system-general} is infinite-dimensional. 
In order to obtain a finite-dimensional problem, we must choose a finite set of states onto which to project. 
The approach chosen by~\cite{joseph2020koopmanvon-960} is to define a grid over the $N$-dimensional hypercube with $m$ points per axis.
The dimension of the system is then $m^N$, which does not allow an amplitude encoding on $\poly\log(N)$ qubits.
We choose an alternative truncation approach as first proposed in~\cite{engel2021linear-e8c}, and consider the dynamics projected onto a space with a maximum number of bosons $m$
\begin{equation}
\label{eq:global-hermite-space}
\mathcal H_{\leq m}
=
\operatorname{span}\{h_\alpha:\ |\alpha|\leq m\}.
\end{equation}
The projected operator is
\begin{equation}
\label{eq:galerkin-projected-operator}
L_m=\Pi_m L\Pi_m.
\quad
\Pi_m:L^2(\mathbb R^N)\to \mathcal H_{\leq m}.
\end{equation}
The Galerkin dynamics are then the finite-dimensional system
\begin{equation}
\label{eq:galerkin-dynamics}
\partial_t u_m(t)=L_m u_m(t),
\qquad
u_m(0)=\Pi_m u_0.
\end{equation}
Thus, the dimension of $\mathcal H_{\leq m}$ is the number of ways to distribute at most $m$ excitations among $N$ modes.
By the stars-and-bars combinatorial formula,
\begin{equation}
\label{eq:global-hermite-dimension}
D_{N,m}
=
\dim \mathcal H_{\leq m}
=
\binom{N+m}{m} \sim \frac{N^m}{m!}.
\end{equation}
Amplitude encoding a state of that dimensionality is possible with $\Theta(m\log(N))$ qubits, and allows for space-efficiency for $N=2^n$ and $m=\operatorname{poly}(n)$. 
We call this the global truncation strategy.

Up to this point we have been using existing concepts to generalise the bosonic encoding to any linear high-dimensional PDE. 
We now investigate whether the global cut-off preserves accuracy, for a restricted class of linear PDEs with strong regularity and stability assumptions.

\subsection{Convergence}
\label{subsec:convergence}
The projection of the infinite-dimensional system onto a finite one is an approximation.
In order to control the accuracy, we must find conditions under which the solution remains concentrated in subspaces with low bosonic numbers. 
This can be quantified using a number operator centric regularity. 
For $s\geq 0$, we define the oscillator $s$-regularity as
\begin{equation}
\label{eq:oscillator-sobolev-norm}
\|u\|_{s,\mathrm{osc}}
:=
\|(I+\hat n)^{s/2}u\|_{L^2(\mathbb R^N)}.
\end{equation}
If we know a bound on this norm, and if the truncated dynamics are stable, we can control the fidelity of the truncated dynamics to the infinite-dimensional dynamics, as we prove in \Cref{app:finite-time-galerkin-proof}. 
\begin{theorem}[Dynamics convergence]
\label{thm:finite-time-galerkin-truncation}
Let $m$ be the truncation order, and $L=P(\hat q,\hat p)$ be a polynomial of total degree $d$ in the position $\hat q$ and momentum $\hat p$ operators of $N$ modes. 
Then there exists a constant $C_P\in\mathbb{R}^+$ such that
$$
\|Lv\|\le C_P\|v\|_{d,\mathrm{osc}}\,.
$$
Let $u(t)$ solve $\partial_t u=Lu$ on $[0,T]$, and let $u_m(t)$ solve \Cref{eq:galerkin-dynamics}. 
Assume that there exists $s>0$ such that the exact solution has bounded oscillator $(s+d)$-regularity as defined in \Cref{eq:oscillator-sobolev-norm} on $[0,T]$, that is there exists a constant $C_{s+d}\in\mathbb{R}^+$ such that
$$
\sup_{0\leq \tau\leq T}
\|u(\tau)\|_{s+d,\mathrm{osc}} \leq C_{s+d} < +\infty\,.
$$
In addition assume that the projected dynamics are uniformly stable on $[0,T]$, that is there exists a constant $C_{T}\in\mathbb{R}^+$ such that
$$
\|e^{tL_m}\|\le C_T,\qquad 0\le t\le T.
$$
Then there exists a constant independent of $m$ such that for all $0\leq t\leq T$ the approximation accuracy is polynomial in $m$
\begin{equation}
\label{eq:finite-time-galerkin-simple}
\|u(t)-u_m(t)\|
\leq
(m+1)^{-s/2}
\left(
1
+
C_T C_P t
\right)C_{s+d}.
\end{equation}
\end{theorem}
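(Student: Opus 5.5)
The argument has two parts: a bound on $L$ in oscillator norms, and a Duhamel comparison of the exact and Galerkin dynamics.

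\textbf{Step 1: the operator bound.} Write each $\hat q_j,\hat p_j$ through \Cref{eq:ladder-operators}, so $P$ becomes a finite sum of normally ordered monomials in $\hat a_j,\hat a_j^\dagger$ of degree at most $d$. By \Cref{eq:ladder-action-hermite}, a monomial of degree $k$ maps $h_\alpha$ to a multiple of some $h_{\alpha'}$ with $\big||\alpha'|-|\alpha|\big|\le k$. The coefficient is bounded by a product of $k$ square roots of occupations, each at most $|\alpha|+k$, so it is $\le (|\alpha|+k)^{k/2}\le c_k(1+|\alpha|)^{k/2}$.

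Different $\alpha$ can collide onto the same $\alpha'$. The clean way to handle this is to factor: a single ladder operator satisfies $\|\hat a_j(I+\hat n)^{-1/2}\|\le 1$ and $\|\hat a_j^\dagger(I+\hat n)^{-1/2}\|\le 1$. Moreover $\hat a_j$ shifts $\hat n$ by $\pm1$, so $(I+\hat n)^{r}\hat a_j^{\#}(I+\hat n)^{-r-1/2}$ is bounded uniformly on the Fock basis.

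Iterating this gives $\|M v\|\le c_k\|(I+\hat n)^{k/2}v\|\le c_k\|v\|_{d,\mathrm{osc}}$ for each monomial $M$ of degree $k\le d$. Summing over monomials gives $C_P$ as a sum of coefficient magnitudes times these constants. $C_P$ may depend on $N$ through the number of monomials, but not on $m$; the theorem only requires $m$-independence.

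\textbf{Step 2: error splitting.} Split $u-u_m=(I-\Pi_m)u+e_m$ with $e_m:=\Pi_m u-u_m\in\mathcal H_{\le m}$. For the tail, $\hat n\ge m+1$ on the range of $I-\Pi_m$, so
\[
\|(I-\Pi_m)w\|\le (m+2)^{-s/2}\|w\|_{s,\mathrm{osc}}\le (m+1)^{-s/2}\|w\|_{s+d,\mathrm{osc}}.
\]
Applied to $w=u(t)$, this bounds the tail by $(m+1)^{-s/2}C_{s+d}$. This is the "$1$" term of the estimate.

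\textbf{Step 3: Duhamel for $e_m$.} Since $\Pi_m$ commutes with $\partial_t$,
\[
\partial_t \Pi_m u=\Pi_m L u=L_m\Pi_m u+\Pi_m L(I-\Pi_m)u .
\]
Also $\partial_t u_m=L_m u_m$ and $e_m(0)=0$. Hence
\[
e_m(t)=\int_0^t e^{(t-\tau)L_m}\Pi_m L(I-\Pi_m)u(\tau)\,d\tau .
\]
Using the stability assumption and $\|\Pi_m\|\le1$ gives $\|e_m(t)\|\le C_T\int_0^t\|L(I-\Pi_m)u(\tau)\|\,d\tau$. Step 1 then gives $\|L(I-\Pi_m)u\|\le C_P\|(I-\Pi_m)u\|_{d,\mathrm{osc}}$.

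Because $(I+\hat n)$ commutes with $\Pi_m$, we have
\[
\|(I-\Pi_m)u\|_{d,\mathrm{osc}}=\|(I-\Pi_m)(I+\hat n)^{d/2}u\|\le (m+1)^{-s/2}\|u\|_{s+d,\mathrm{osc}}\le (m+1)^{-s/2}C_{s+d}.
\]
Integrating yields $\|e_m(t)\|\le C_TC_Pt\,(m+1)^{-s/2}C_{s+d}$.

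Adding Steps 2 and 3 gives \Cref{eq:finite-time-galerkin-simple}. It would be sharper to use orthogonality, since the two error pieces lie in orthogonal subspaces, but the triangle inequality is all the stated bound needs.

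\textbf{Main obstacle.} The delicate point is Step 1, the relative boundedness of a degree-$d$ polynomial by $(I+\hat n)^{d/2}$, because of the commutation of powers of $(I+\hat n)$ past ladder operators. I would first try the induction on monomial degree sketched above, using the shift relation $\hat a_j f(\hat n)=f(\hat n+1)\hat a_j$. A secondary technicality is justifying Duhamel for unbounded $L$: I would work on finite linear combinations of $h_\alpha$ and extend by density, using the assumed regularity of $u$.
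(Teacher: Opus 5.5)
Your proposal is correct and follows the paper's proof essentially step for step. It uses the same iterated ladder-operator bound giving relative control by $(I+\hat n)^{d/2}$, the same splitting $u-u_m=(I-\Pi_m)u+e_m$ with $e_m=\Pi_m u-u_m$, and the same Duhamel, stability and tail estimate based on $[\Pi_m,\hat n]=0$. Your sharper $(m+2)^{-s/2}$ tail, the orthogonality remark and the density justification of Duhamel are refinements the paper omits, but they do not change the argument.
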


As we argue in \Cref{app:finite-time-galerkin-proof}, the stability assumption is not automatic for $L=P(\hat q,\hat p)$, as some polynomial generators can exhibit finite-time singularities in their associated Heisenberg dynamics.
However, stability is automatic in common unitary or contractive cases, for example $L=-iH$ with $H$ self-adjoint or $L=-H$ with $H \succeq 0$. 

We also prove a convergence result for ground state problems.
Ground-state problems in that context encapsulate steady state solutions of linear PDEs, as shown in \Cref{sec:fokker-planck}.
We consider the special case where the operator $L$ is Hermitian and positive, with its ground energy equal to 0, for which we give examples in \Cref{sec:fokker-planck}.
We prove the following theorem in \Cref{app:ground-state-consistency}.

\begin{theorem}[Statics convergence]
\label{thm:statics-convergence}
Let $H$ be a nonnegative self-adjoint operator on $L^2(\mathbb R^N)$ with normalized ground state
$\phi$ satisfying
$
H\phi=0.
$
Assume that $H$ has spectral gap $\gamma>0$, namely
$$
\sigma(H)\subset \{0\}\cup[\gamma,\infty).
$$
Let
$$
H_m:=\Pi_m H\Pi_m
$$
on $\mathcal H_{\le m}$, and let $\lambda_{0,m}$ and $\lambda_{1,m}$ be its two lowest eigenvalues and $\phi_m$ a normalized ground state.
Let
$
\gamma_m:=\lambda_{1,m}-\lambda_{0,m}.
$

Assume that, for some $d\ge 0$, there exists a constant $C_H\in \mathbb{R}^+$ such that
\begin{equation}
\|Hv\|\le C_H\|v\|_{d,\mathrm{osc}}\,.
\label{eq:H-osc-control}
\end{equation}
Assume also that for some $s>0$, there exists a constant $C_{s+d}\in \mathbb{R}^+$ such that
$$
\|\phi\|_{s+d,\mathrm{osc}} \leq C_{s+d} <+\infty\,.
$$
Then, for all sufficiently large $m$, defining
\begin{equation}
E_m
:=
2C_HC_{s+d}(m+1)^{-s/2},
\label{eq:Em-definition}
\end{equation}
one has
\begin{equation}
0\le \lambda_{0,m}\le E_m .
\label{eq:lambda0m-bound}
\end{equation}
\begin{equation}
\|\phi_m-\phi\|
\le
\left(\frac{2E_m}{\gamma}\right)^{1/2}.
\label{eq:ground-state-projection-error}
\end{equation}
\begin{equation}
\gamma_m
\ge
\gamma-3E_m .
\label{eq:truncated-gap-bound}
\end{equation}
In particular, if $E_m\le \gamma/6$, then $\gamma_m\ge \gamma/2$.
\end{theorem}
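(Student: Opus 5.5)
The plan is to get all three estimates from one tail bound on the ground state and a variational (min--max) comparison; no dynamics are involved. Write $Q_m := I-\Pi_m$. Since $\hat n$ is diagonal in the Hermite basis and $Q_m$ keeps only components with $|\alpha|\ge m+1$, we have the operator inequality $(1+\hat n)^{d}Q_m \le (m+1)^{-s}(1+\hat n)^{s+d}Q_m$. Hence
$$
\|Q_m\phi\|_{d,\mathrm{osc}} \le (m+1)^{-s/2}\|\phi\|_{s+d,\mathrm{osc}} \le C_{s+d}(m+1)^{-s/2},
$$
and, with $d=0$, also $\|Q_m\phi\|\le C_{s+d}(m+1)^{-(s+d)/2}\to 0$. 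Combining the first bound with \eqref{eq:H-osc-control} gives $\|HQ_m\phi\|\le C_H C_{s+d}(m+1)^{-s/2} = E_m/2$. Finite Hermite expansions lie in the domain of $H$ by \eqref{eq:H-osc-control}, so all the quantities below are well defined. Throughout I interpret ``normalized ground state'' as saying that $\ker H=\operatorname{span}\{\phi\}$, i.e.\ $0$ is a simple eigenvalue; I would state this explicitly, since the gap statement needs it.

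\emph{Step 1 (bound on $\lambda_{0,m}$).} The lower bound $\lambda_{0,m}\ge 0$ holds because $H_m$ is a compression of the nonnegative operator $H$. For the upper bound, take the trial vector $\Pi_m\phi$. Since $H\phi=0$,
$$
\langle \Pi_m\phi, H\Pi_m\phi\rangle = \langle \Pi_m\phi, H(\phi-Q_m\phi)\rangle = -\langle \Pi_m\phi, HQ_m\phi\rangle .
$$
The absolute value of this is at most $\|\Pi_m\phi\|\,E_m/2$. Dividing by $\|\Pi_m\phi\|^2$ gives
$$
\lambda_{0,m}\le \frac{E_m}{2\|\Pi_m\phi\|}.
$$
``Sufficiently large $m$'' is chosen so that $\|Q_m\phi\|^2\le 3/4$, hence $\|\Pi_m\phi\|\ge 1/2$. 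This yields \eqref{eq:lambda0m-bound}.

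\emph{Step 2 (ground-state error).} Decompose $\phi_m = c\,\phi + w$ with $w\perp\phi$, choosing the global phase of $\phi_m$ so that $c\ge 0$. Because $H\phi=0$ and the spectral gap gives $H\ge\gamma$ on $\phi^\perp$, we get
$$
\lambda_{0,m}=\langle\phi_m,H\phi_m\rangle=\langle w,Hw\rangle\ge\gamma\|w\|^2,
$$
so $\|w\|^2\le E_m/\gamma$. Then $c=\sqrt{1-\|w\|^2}\ge 1-\|w\|^2$, and therefore $\|\phi_m-\phi\|^2=2-2c\le 2\|w\|^2\le 2E_m/\gamma$. This is \eqref{eq:ground-state-projection-error}.

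\emph{Step 3 (truncated gap).} By Courant--Fischer, $\lambda_{1,m}$ is the minimum over $2$-dimensional subspaces $V\subset\mathcal H_{\le m}$ of the maximal Rayleigh quotient of $H$ on $V$. Such subspaces are among those allowed for $H$ itself, so $\lambda_{1,m}$ is at least the second min--max value of $H$. That value is $\ge\gamma$: any $2$-dimensional $V$ contains a unit vector orthogonal to $\phi$, whose Rayleigh quotient is $\ge\gamma$. Hence $\gamma_m\ge\gamma-\lambda_{0,m}\ge\gamma-E_m\ge\gamma-3E_m$, which is \eqref{eq:truncated-gap-bound}. The final claim is immediate: if $E_m\le\gamma/6$, then $\gamma_m\ge\gamma-\gamma/2=\gamma/2$.

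The only delicate points are the ones flagged above: the simplicity of the zero eigenvalue, which Step 3 needs so that the second min--max value of $H$ is $\ge\gamma$, and domain bookkeeping. Otherwise the argument is elementary, and the stated constants (the factor $2$ in $E_m$ and the $3$ in the gap bound) leave slack beyond what Steps 1--3 actually use.
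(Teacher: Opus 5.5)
Your proof is correct. Steps 1 and 2 follow the paper's argument: the same trial vector $\Pi_m\phi/\|\Pi_m\phi\|$, the same identity $H\Pi_m\phi=-H(I-\Pi_m)\phi$ combined with the tail bound, and the same decomposition $\phi_m=c\,\phi+w$ with $\lambda_{0,m}\ge\gamma\|w\|^2$.

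Step 3 is where you genuinely depart from the paper. The paper argues perturbatively. It takes a unit $v\in\mathcal H_{\le m}$ orthogonal to $\phi_m$, not to $\phi$. It then uses Step 2 to bound $|\langle v,\phi\rangle|\le\|\phi-\phi_m\|\le(2E_m/\gamma)^{1/2}$. This gives $\langle v,Hv\rangle\ge\gamma(1-|\langle v,\phi\rangle|^2)\ge\gamma-2E_m$, hence $\lambda_{1,m}\ge\gamma-2E_m$ and $\gamma_m\ge\gamma-3E_m$.

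You instead use the Rayleigh--Ritz (Cauchy interlacing) property of compressions. Every two-dimensional subspace of $\mathcal H_{\le m}$ contains a unit vector orthogonal to $\phi$ itself, so $\lambda_{1,m}\ge\gamma$ with no loss. Your route is shorter and does not depend on Step 2 or on $m$ being large. It also gives the sharper bound $\gamma_m\ge\gamma-\lambda_{0,m}\ge\gamma-E_m$. The paper's detour through the eigenvector estimate buys no extra generality, since it too relies on $\langle v,H_mv\rangle=\langle v,Hv\rangle$. Its only effect is the additional $2E_m$ loss.

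You are right to state $\ker H=\operatorname{span}\{\phi\}$ explicitly; the paper uses it tacitly. However, your closing remark attributes this need only to Step 3. Your Step 2 needs it equally: the claim $H\ge\gamma$ on $\phi^\perp$ fails if the kernel is larger. The paper's estimate $\langle\phi_m,H\phi_m\rangle\ge\gamma\|\eta_m\|^2$ relies on it in the same way.
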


Having established convergence of the global truncation, we now ask whether the truncated operator $L_m$ can be block encoded efficiently on qubits. 
This is the subject of the next section.

\section{Qubit encoding of the global Hermite truncation}
In this section, we introduce a qubit encoding of the globally truncated space $\mathcal H_{\leq m}$. 
The encoding uses $m$ registers and represents bosonic states by states that are invariant under permutations of these registers. 
We first define the encoding, then construct block encodings of the annihilation, creation, position, momentum, and number operators. Finally, we combine these building blocks to block encode structured polynomial differential operators.
\label{sec:qubit-encoding}

\subsection{Bosonic indistinguishability and the symmetric subspace}
\label{subsec:bosonic-indistinguishability}
We encode states containing at most $m$ bosons using $m$ registers, one for each possible boson. 
Each register stores one of $N$ labels, indicating the mode occupied by that boson, or an empty label $\emptyset$ when fewer than $m$ bosons are present. 
The state space of one register is therefore
\begin{equation}
\mathcal K_N
=
\operatorname{span}
\{
\ket{\emptyset},\ket{1},\ldots,\ket{N}
\}
\simeq
\mathbb C^{N+1},
\end{equation}
and the full register space is $\mathcal K_N^{\otimes m}$.

Bosons are indistinguishable, so exchanging two registers must not change the encoded state. 
We therefore restrict to states that are invariant under any permutation of the $m$ registers. 
These states form the symmetric subspace
\begin{equation}
\label{eq:symmetric-register-space}
\operatorname{Sym}^m(\mathcal K_N)
\subset
\mathcal K_N^{\otimes m},
\end{equation}
where $\operatorname{Sym}^m(\mathcal K_N)$ denotes the set of states unchanged by permutations of the registers.

We write $\ket{\alpha}_F$ for a state in the occupation-number representation, where $\alpha=(\alpha_1,\ldots,\alpha_N)$ and $\alpha_j$ is the number of bosons in mode $j$.
We write $\ket{b}_Q=\ket{b_1,\ldots,b_m}_Q$ for a state in the register representation, where each $b_r$ is either a mode label or the empty label $\emptyset$.

For example, consider at most three bosons distributed among four modes, so that $N=4$ and $m=3$.
The state with two bosons in mode $1$ and one boson in mode $4$ is encoded as
$$
\ket{2,0,0,1}_F
\mapsto
\frac{
\ket{1,1,4}_Q+
\ket{1,4,1}_Q+
\ket{4,1,1}_Q
}{\sqrt 3}.
$$
There is no empty slot because the total occupation is already $m=3$. 
By contrast, the state with one boson in mode $1$ and one boson in mode $4$ contains only two bosons. It is encoded as
\begin{align*}
    &\ket{1,0,0,1}_F
    \mapsto
    \frac{1}{\sqrt 6} \Big(
    \ket{1,4,\emptyset}_Q+
    \ket{1,\emptyset,4}_Q+\\
    &\ket{4,1,\emptyset}_Q+
    \ket{4,\emptyset,1}_Q+
    \ket{\emptyset,1,4}_Q+
    \ket{\emptyset,4,1}_Q\Big)
\end{align*}
More generally, for $\alpha$ satisfying $|\alpha|\leq m$, we form a list containing $\alpha_j$ copies of each label $j$ and $m-|\alpha|$ copies of $\emptyset$. 
The encoded state is the normalized uniform superposition over all distinct permutations of this list. 
This defines the map $V_m$. Its explicit expression and proof are given in \Cref{app:symmetric-encoding-isometry}.
\begin{equation}
\label{eq:symmetric-encoding-isometry}
V_m:\mathcal H_{\leq m}\longrightarrow \operatorname{Sym}^m(\mathcal K_N),
\end{equation}
\begin{proposition}[Symmetric encoding]
\label{prop:symmetric-encoding-global-cut-off}
The map $V_m$ in \Cref{eq:symmetric-encoding-isometry} is an isometry from the globally truncated Hermite space $\mathcal H_{\leq m}$ onto $\operatorname{Sym}^m(\mathcal K_N)$. 
\end{proposition}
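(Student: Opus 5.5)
We define $V_m$ on the Hermite basis and extend it linearly, using the orthonormal basis $\{h_\alpha:|\alpha|\le m\}$ of $\mathcal H_{\le m}$ from \Cref{eq:global-hermite-space}. For $|\alpha|\le m$, let $\alpha_0:=m-|\alpha|$ be the number of empty slots. Let $S_\alpha\subset\{\emptyset,1,\ldots,N\}^m$ be the set of words containing exactly $\alpha_0$ copies of $\emptyset$ and $\alpha_j$ copies of each label $j$. Set
\begin{equation*}
V_m h_\alpha := |S_\alpha|^{-1/2}\sum_{b\in S_\alpha}\ket{b}_Q,
\qquad
|S_\alpha|=\frac{m!}{\alpha_0!\,\alpha_1!\cdots\alpha_N!},
\end{equation*}
where $|S_\alpha|$ is the multinomial count of distinct rearrangements. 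The proof then has three steps: the image lies in the symmetric subspace, $V_m$ is isometric, and $V_m$ is onto.

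\textbf{Symmetry and isometry.} Permuting the registers maps $S_\alpha$ bijectively onto itself, since it preserves the letter counts. Hence each $V_m h_\alpha$ is permutation invariant and lies in $\operatorname{Sym}^m(\mathcal K_N)$. Each image vector has norm one because it is a uniform superposition of $|S_\alpha|$ distinct orthonormal computational basis states. For $\alpha\neq\beta$, the sets $S_\alpha$ and $S_\beta$ are disjoint, because the letter counts $(\alpha_0,\alpha_1,\ldots,\alpha_N)$ of a word determine $\alpha$. So the images are orthogonal. The map therefore sends an orthonormal basis to an orthonormal family, and $V_m$ is an isometry.

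\textbf{Surjectivity.} This is the only step needing care. The words in $\{\emptyset,1,\ldots,N\}^m$ are partitioned into orbits under the symmetric group $S_m$. Each orbit consists of the words sharing a fixed content vector $(\alpha_0,\ldots,\alpha_N)$ with $\sum_k\alpha_k=m$. Such vectors correspond bijectively to multi-indices $\alpha\in\mathbb N^N$ with $|\alpha|\le m$, by dropping $\alpha_0$. So every orbit is some $S_\alpha$.

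Now take any $\psi=\sum_b c_b\ket{b}_Q\in\operatorname{Sym}^m(\mathcal K_N)$. Invariance under each permutation operator $P_\sigma$ gives $c_{\sigma(b)}=c_b$, so the coefficients are constant on orbits. Hence $\psi=\sum_\alpha c_\alpha |S_\alpha|^{1/2}\, V_m h_\alpha$, where $c_\alpha$ is the common value of $c_b$ on $S_\alpha$, and this sum is in the range of $V_m$.

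As a consistency check, the dimension count agrees with \Cref{eq:global-hermite-dimension}. The dimension of $\operatorname{Sym}^m(\mathbb C^{N+1})$ is $\binom{N+m}{m}=D_{N,m}$. An injective linear map between spaces of equal finite dimension is onto, which gives surjectivity independently of the orbit argument.
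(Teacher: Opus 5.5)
Your proof is correct and follows essentially the same route as the paper: uniform superpositions over the distinct rearrangements of the multiset, orthonormality from the disjointness of the supports, and surjectivity from the bijection between content vectors $(\alpha_0,\alpha_1,\ldots,\alpha_N)$ and multi-indices with $|\alpha|\le m$, with the same $\binom{N+m}{m}$ dimension check. Your orbit argument, that permutation invariance forces the coefficients to be constant on $S_m$-orbits, spells out the step the paper simply asserts, namely that the basis of $\operatorname{Sym}^m(\mathcal K_N)$ is indexed by occupation numbers.
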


From now on, when the cut-off $m$ is fixed, we write $|\alpha\rangle$ for the encoded state $V_m|\alpha\rangle_F$.
The next step is to represent the operators acting on this space, while preserving the permutation symmetry on the $m$ registers.

\subsection{Encoding of primitive bosonic operators}
\label{subsec:encoded-bosonic-primitives}

In this subsection, we construct qubit operators representing the truncated annihilation, creation, position, momentum, and number operators introduced in \Cref{subsec:background}. 
Since encoded states are invariant under permutations of the registers, these operators must act identically on all $m$ registers. The detailed constructions are given in \Cref{app:encoded-bosonic-primitives} and yield the following proposition.

\begin{proposition}[Encoded bosonic primitives]
\label{prop:encoded-bosonic-primitives}
Under the symmetric encoding $V_m$, the globally truncated annihilation, creation, position, momentum, and number operators are represented by operators $\hat A_j$, $\hat A_j^\dagger$, $\hat Q_j$, $\hat P_j$, and $\hat N$. These operators admit block encodings with gate complexity polynomial in $m$ and $\log N$. Their normalizations scale as $O(\sqrt m)$ for $\hat A_j$, $\hat A_j^\dagger$, $\hat Q_j$, and $\hat P_j$, and as $O(m)$ for $\hat N$.
\end{proposition}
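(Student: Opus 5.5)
The plan is to realise each primitive as a symmetrised sum of single-register operators, $\hat O=\sum_{r=1}^m o^{(r)}$, where $o^{(r)}$ acts on register $r$ (possibly with a small scalar correction). Such operators commute with register permutations and therefore preserve $\operatorname{Sym}^m(\mathcal K_N)$. A linear combination of unitaries (LCU) over the index $r$, prepared as a uniform superposition on an ancilla of $\lceil\log_2 m\rceil$ qubits, then yields a block encoding with normalization $m\,\|o\|$ times whatever extra factor comes from matching matrix elements.

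\textbf{Number operator.} Take $o=\sum_{j}\ket{j}\bra{j}=I-\ket{\emptyset}\bra{\emptyset}$, which counts non-empty registers. On $V_m\ket{\alpha}_F$ the sum $\sum_r o^{(r)}$ acts as $|\alpha|$ exactly. For a single mode, $\hat N_j=\sum_r \ket{j}\bra{j}^{(r)}$. Each $o^{(r)}$ is a projector, hence a combination of two unitaries, and a controlled comparison on $\lceil\log_2(N+1)\rceil$ qubits implements it. The LCU has normalization $O(m)$ and gate cost $O(m\log N)$.

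\textbf{Annihilation operator.} Take $o=\ket{\emptyset}\bra{j}$ and compute $\sum_r o^{(r)}$ on the encoded basis state. Write $V_m\ket{\alpha}_F=\binom{m}{\alpha,m-|\alpha|}^{-1/2}\sum_{b}\ket{b}$, where the sum runs over distinct arrangements $b$. Applying $\sum_r o^{(r)}$ replaces one $j$ label by $\emptyset$. Each arrangement $b'$ of $\alpha-e_j$ is reached from exactly $m-|\alpha|+1$ arrangements $b$: choose which empty slot of $b'$ was previously a $j$.

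Comparing the normalizations of the multinomial coefficients gives
\[
\sum_r o^{(r)}V_m\ket{\alpha}_F=\sqrt{\alpha_j\,(m-|\alpha|+1)}\;V_m\ket{\alpha-e_j}_F .
\]
The spurious factor $\sqrt{m-|\alpha|+1}$ must be removed. It depends only on the total number of bosons, and $\hat N$ computes that number. So I would define
\[
\hat A_j=\Big(\sum_r \ket{\emptyset}\bra{j}^{(r)}\Big)\,f(\hat N), \qquad f(n)=(m-n+1)^{-1/2}.
\]
This factor is bounded by $1$, since $|\alpha|\le m$. It is implemented by computing $|\alpha|=\sum_r[b_r\neq\emptyset]$ into an ancilla register with an $O(m\log m)$ adder, applying a controlled rotation with amplitude $f$, and uncomputing. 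Alternatively, polynomial singular value transformation (QSVT) applied to the block encoding of $\hat N$ would work.

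The creation operator is the adjoint, $\hat A_j^\dagger=f(\hat N)\sum_r\ket{j}\bra{\emptyset}^{(r)}$. It automatically annihilates states with $|\alpha|=m$, because no empty slot remains, and this is exactly the global truncation $\Pi_m\hat a_j^\dagger\Pi_m$.

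Each of $\ket{\emptyset}\bra{j}$ and $\ket{j}\bra{\emptyset}$ has norm $1$ and is a combination of two unitaries: for example $\ket{\emptyset}\bra{j}=(X_{\emptyset j}-Z\text{-type reflections})/2$, with the swap $X_{\emptyset j}$ implemented by $O(\log N)$ controlled bit flips. The sum over $r$ therefore gives normalization $O(m)$ naively. To reach the claimed $O(\sqrt m)$, I would avoid the triangle inequality on the sum and bound the norm of the symmetrised operator directly. On the symmetric subspace, $\big\|\sum_r\ket{\emptyset}\bra{j}^{(r)}f(\hat N)\big\|=\max_\alpha\sqrt{\alpha_j}\le\sqrt m$.

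\textbf{Main obstacle.} Achieving the $\sqrt m$ normalization in a genuine block encoding is the hard step. I would first try to build a block encoding with normalization $\sqrt m$ directly. One route uses an isometry that coherently selects a register $r$ holding label $j$: prepare $\sum_r\ket{r}$ weighted by the indicator $[b_r=j]$, normalised by $\sqrt{\alpha_j}$. Another route block-encodes the operator $\sum_r\ket{\emptyset}\bra{j}^{(r)}$ with LCU normalization $m$, then uses its known singular values, $\sqrt{\alpha_j(m-|\alpha|+1)}$, together with $f(\hat N)$ and uniform singular value amplification by a factor $\sqrt m$. This costs $O(\sqrt m)$ queries, which remains polynomial in $m$.

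\textbf{Position and momentum.} These follow from $\hat Q_j=(\hat A_j+\hat A_j^\dagger)/\sqrt2$ and $\hat P_j=-i(\hat A_j-\hat A_j^\dagger)/\sqrt2$ via a two-term LCU, preserving the $O(\sqrt m)$ normalization and the cost that is polynomial in $m$ and $\log N$.
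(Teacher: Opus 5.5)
Your construction of the encoded operators is the paper's: the symmetrised transition $B_j=\sum_r\ket{\emptyset}\bra{j}_r$ with matrix elements $\sqrt{\alpha_j(m-|\alpha|+1)}$, the correction $f(\hat N)=(N_\emptyset+I)^{-1/2}$, the number operator $\hat N=mI-N_\emptyset$, and $\hat Q_j,\hat P_j$ as two-term combinations. Your $f(\hat N)B_j^\dagger$ equals the paper's $B_j^\dagger N_\emptyset^{-1/2}$, because $B_j^\dagger$ lowers $N_\emptyset$ by one.

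Where you genuinely differ is the normalization. The paper simply asserts that the uniform-superposition LCU over $r$, followed by the arithmetic correction, gives $O(\sqrt m)$. As described, that construction gives $O(m)$ and cannot do better. On the symmetric subspace $\|B_j\|=\max_\alpha\sqrt{\alpha_j(m-|\alpha|+1)}\approx(m+1)/2$, and the correction has norm $1$.

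You flag exactly this gap and repair it with uniform singular value amplification, which is sound here. The encoded operator preserves the symmetric subspace, and on it every singular value of the normalization-$m$ encoding of $B_jf(\hat N)$ is $\sqrt{\alpha_j}/m\le 1/\sqrt m$. Amplifying by $\gamma=\sqrt m/2$ therefore gives an $\epsilon$-approximate block encoding with normalization $2\sqrt m$, using $O(\sqrt m\log(m/\epsilon))$ calls. The price is an extra $\sqrt m$ factor in gate count and an approximation error.

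An exact alternative with no such overhead is the asymmetric factorization $\hat A_j=(LN_\emptyset^{-1/2})^\dagger W R$. Here $R\ket{b}=\sum_r[b_r=j]\ket{r}\ket{b}$ is a uniform choice of $r$ plus a flag, so $\|R\|\le\sqrt m$. $W$ is the swap $\emptyset\leftrightarrow j$ on the selected register. $LN_\emptyset^{-1/2}\ket{b}=N_\emptyset(b)^{-1/2}\sum_{r:b_r=\emptyset}\ket{r}\ket{b}$ is a partial isometry that can be implemented by reversible counting.

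Your first route seems to aim at this, but it is incomplete as written. Normalizing the $j$-selection by $\sqrt{\alpha_j}$, with the empty-slot isometry on the other side, yields the phase operator $\hat A_j\hat N_j^{-1/2}$ at normalization $1$. You would still need to multiply by a block encoding of $\sqrt{\hat N_j/m}$ to recover $\hat A_j$ at normalization $\sqrt m$.

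A minor slip: the correct decomposition is $\ket{\emptyset}\bra{j}=X_{\emptyset j}(I-Z_j)/2$ with $Z_j=I-2\ket{j}\bra{j}$, not $X_{\emptyset j}$ minus a reflection.
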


The construction can be understood most easily for the annihilation operator $\hat a_j$, which removes one excitation from mode $j$. 
In the register encoding, this corresponds to replacing one label $j$ by the empty label $\emptyset$. Because the label $j$ may occur in any register, we sum this transition over all registers:
\begin{equation}
B_j
:=
\sum_{r=1}^m
\left(\ketbra{\emptyset}{j}\right)_r.
\end{equation}
The action of $B_j$ contains an additional factor depending on the number of empty registers. We count these registers using
\begin{equation}
\hat N_{\emptyset}
:=
\sum_{r=1}^m
\left(\ketbra{\emptyset}{\emptyset}\right)_r
\end{equation}
and remove the additional factor by a diagonal normalization, as detailed in \Cref{app:encoded-bosonic-primitives}. This gives the encoded annihilation operator $\hat A_j$; the creation operator $\hat A_j^\dagger$ is obtained similarly. The remaining operators follow from
\begin{equation}
\hat Q_j
=
\frac{\hat A_j+\hat A_j^\dagger}{\sqrt 2},
\quad
\hat P_j
=
\frac{\hat A_j-\hat A_j^\dagger}{i\sqrt 2},
\quad
\hat N
=
mI-\hat N_{\emptyset}.
\end{equation}

The encoding of bosonic operators over exponentially many modes is inspired by \cite{barthe2025gate-based-8f8}. 
Whereas that work is tailored to Gaussian bosonic dynamics, the present construction uses $m$ permutation-symmetric registers to represent states with up to $m$ excitations, thereby supporting non-Gaussian states and operators.
Bravyi et al.~\cite{bravyi2025quantum-101} instead store the occupied mode labels in a sorted, padded list. 
Both approaches exploit low bosonic occupation, although Bravyi et al. use a weighted cutoff. 
Their sorted representation is tailored to the cubic operators arising in their dynamics. 
In the present encoding, each elementary bosonic operation is expressed as a sum of single-register transitions, which allows more general polynomial operators to be constructed modularly.

\subsection{Block encoding of structured polynomials in position and momentum operators}
\label{subsec:block-encoding-polynomial-operators}
The previous subsection gives block encodings for the elementary bosonic operators. 
We now compose these primitives to block encode the truncated differential operator. 
An arbitrary polynomial may contain exponentially many coefficients in $N$. 
We therefore restrict to operators given as a polynomial-size sum of products of structured linear combinations $l$ of position and momentum operators.

If $v,w\in \mathbb C^N$, define
\begin{equation}
\label{eq:encoded-linear-form}
l(v,w)
=
\sum_{j=1}^N v_j \hat Q_j+\sum_{j=1}^N w_j \hat P_j.
\end{equation}
When the normalized states proportional to $v$ and $w$ are efficiently preparable, $l(v,w)$ admits an efficient block encoding with normalization $O(\sqrt m(\|v\|+\|w\|))$. 
This is the primitive used below for polynomial differential operators.

\begin{definition}[Structured operator polynomial]
\label{def:structured-operator-polynomial}
Let $N=2^n$. 
A structured operator polynomial over $N$ modes is a polynomial of constant degree $d\in O(1)$ of the form
\begin{equation}
\label{eq:structured-operator-polynomial-definition}
P(\hat q,\hat p)
=
\sum_{r=1}^R
c_r
\prod_{\ell=1}^{k_r}
\left(
a_{r,\ell}I
+
v_{r,\ell}\cdot \hat q
+
w_{r,\ell}\cdot \hat p
\right),
\end{equation}
where $k_r\leq d$, $c_r,a_{r,\ell}\in\mathbb C$ and $v_{r,\ell},w_{r,\ell}\in\mathbb C^N$. 
We assume in addition that:
\begin{enumerate}
    \item The integer $R$; the magnitude of the scalars $c_r$ and $a_{r,\ell}$; the vector norms $\|v_{r,\ell}\|$ and $\|w_{r,\ell}\|$ are all bounded by $\operatorname{poly}(n)$;
    \item for every $(r,\ell)$, the normalized amplitude-encoded states of the non-zero vectors $v_{r,\ell}$ and $w_{r,\ell}$
    can be prepared by circuits of size $\operatorname{poly}(n)$. 
\end{enumerate}
\end{definition}
We show in \Cref{app:block-encoding-polynomial-operators} that when the operator of interest is a structured polynomial, the operator admits an efficient block encoding.

\begin{theorem}[Structured polynomial block encoding]
\label{thm:structured-polynomial-block-encoding}
Let $N=2^n$ and $m\geq 0$, and let
$$
L_m=\Pi_m P(\hat q,\hat p)\Pi_m
$$
where $P$ is a structured polynomial of degree $d$ over $N$ variables as in \Cref{def:structured-operator-polynomial}.
The encoded operator $V_m L_m V_m^\dagger$ admits a block encoding with normalization $\alpha\in O(C_P(m+d)^{d/2})$ with $C_P \in O(\poly(n))$ and gate complexity $ O(\operatorname{poly}(m,n))$.
\end{theorem}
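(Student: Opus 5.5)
The plan is to assemble the block encoding of $V_m L_m V_m^\dagger$ from the primitives of \Cref{prop:encoded-bosonic-primitives}, using the standard closure rules: linear combination of unitaries (LCU) and products of block encodings.

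\textbf{Step 1: one linear factor.} Fix a factor $a_{r,\ell}I+v_{r,\ell}\cdot\hat q+w_{r,\ell}\cdot\hat p$. I encode $l(v,w)=\sum_j v_j\hat Q_j+\sum_j w_j\hat P_j$ by rewriting it as
\[
\sum_j \alpha_j\hat A_j+\sum_j\beta_j\hat A_j^\dagger,
\]
where $\alpha=(v+iw)/\sqrt2$ and $\beta=(v-iw)/\sqrt2$. These are vectors whose normalized states are preparable by assumption 2 of \Cref{def:structured-operator-polynomial}.

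The key gadget is a mode-collective version of the single-register transitions. I would define
\[
B_\alpha=\sum_{r=1}^m\Bigl(\ket{\emptyset}\bra{\textstyle\sum_j\bar\alpha_j j}\Bigr)_r .
\]
Here the bra is a preparable state, so each term is a rank-one operator of the form $U_\alpha$-conjugated $\ket{\emptyset}\bra{\cdot}$, costing $\operatorname{poly}(n)$ gates. Summing it over $m$ registers by LCU introduces a factor $m$ in normalization. That is too large, so instead I would reuse the diagonal correction from \Cref{app:encoded-bosonic-primitives}: $B_\alpha$ restricted to the symmetric subspace acts like $\sqrt{(\hat N_\emptyset+1)}\,\hat a_\alpha$ up to its empty-count factor. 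Multiplying by a block-encoded diagonal function of $\hat N_\emptyset$ then yields $\hat A_\alpha$ with normalization $O(\sqrt m\,\|\alpha\|)$, matching the paper's stated $O(\sqrt m(\|v\|+\|w\|))$.

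The identity term $a_{r,\ell}I$ is added by a further LCU. The resulting normalization for the factor is $|a_{r,\ell}|+O(\sqrt{m}(\|v\|+\|w\|))$.

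\textbf{Step 2: truncation-aware products.} The main obstacle is that $\Pi_m P\Pi_m\neq\prod_\ell \Pi_m(\cdot)\Pi_m$: intermediate states may need up to $m+d$ bosons, since a creation can be followed by an annihilation. I therefore build the encoding on a larger register space with $m+d$ registers. On that space $\mathcal H_{\le m+d}$, each product of $k_r\le d$ linear factors applied to $\mathcal H_{\le m}$ never leaves $\mathcal H_{\le m+k_r}$, so the truncated primitives act exactly. This is why the normalization involves $(m+d)^{1/2}$ per factor.

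I embed $V_m$ into $V_{m+d}$ by appending $d$ empty registers, which commutes with the symmetrization up to the isometry of \Cref{prop:symmetric-encoding-global-cut-off}. I project back at the end with $\Pi_m$, implemented as a block-encoded projector from the comparison $\hat N_\emptyset\ge d$ on the enlarged register, with $O(\operatorname{poly}(m,n))$ gates. Products of block encodings multiply normalizations, giving $\prod_\ell\bigl(|a_{r,\ell}|+O(\sqrt{m+d}(\|v_{r,\ell}\|+\|w_{r,\ell}\|))\bigr)=O(\operatorname{poly}(n)(m+d)^{d/2})$.

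\textbf{Step 3: sum over $r$.} A final LCU over the $R$ terms with weights $c_r$ gives normalization $\alpha\le\sum_r|c_r|\,\alpha_r=O(C_P(m+d)^{d/2})$, where $C_P=\operatorname{poly}(n)$ by assumption 1.

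The gate count is $R\cdot d$ factor encodings, each of cost $\operatorname{poly}(m,n)$, plus the PREPARE/SELECT overhead of $O(\log R)$ qubits. The total is $\operatorname{poly}(m,n)$, and the number of qubits is $O((m+d)\log N)$ plus ancillas.

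I expect the delicate point to be Step 2's bookkeeping: checking that the symmetric encoding at cutoff $m+d$ restricted and projected reproduces exactly $V_m\Pi_mP\Pi_mV_m^\dagger$.
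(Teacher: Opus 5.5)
Your architecture matches the paper's. You encode each affine factor at the single-register level from the coefficient-state preparations; your rank-one transition $B_\alpha$ is the right device to get $\|v\|_2$ rather than $\|v\|_1$ scaling. You then evaluate at the enlarged cut-off $M=m+d$ as in the boundary-safe identity, project with $\Pi_m$, multiply normalizations along each product, and finish with an LCU over $r$.

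The first gap is your remedy for the factor $m$, and this is exactly the step that produces the exponent $d/2$. Normalizations multiply under composition. An LCU encoding of $B_\alpha$ (normalization $m\|\alpha\|$) composed with a block encoding of $(\hat N_\emptyset+I)^{-1/2}$ (norm $1$) therefore encodes $\hat A_\alpha/(m\|\alpha\|)$. No better encoding of $B_\alpha$ can rescue this, since on the symmetric subspace $\|B_\alpha\|=\|\alpha\|\max_{1\le k\le m}\sqrt{k(m-k+1)}\approx\|\alpha\|(m+1)/2$. As written you obtain $O(C_P(m+d)^{d})$: a diagonal correction fixes matrix elements, not the subnormalization.

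What is missing is a state-dependent PREPARE that absorbs $\sqrt{N_\emptyset}$ isometrically:
\begin{itemize}
\item The map $\ket{b}\mapsto n_\emptyset(b)^{-1/2}\sum_{r:\,b_r=\emptyset}\ket{r}\ket{b}$, with $n_\emptyset(b)$ the number of empty entries of the basis string $b$, is an isometry on $n_\emptyset\ge 1$ and costs $\operatorname{poly}(m,\log N)$ gates.
\item Follow it by the controlled map $\ket{\emptyset}\mapsto\|\beta\|^{-1}\sum_j\beta_j\ket{j}$ on register $r$.
\item Unprepare with the uniform state on $\{1,\ldots,m\}$.
\end{itemize}
This gives exactly $\sum_j\beta_j\hat A_j^\dagger/(\sqrt m\,\|\beta\|)$, and the annihilation parts follow by taking adjoints. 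The paper itself simply takes the $O(\sqrt M)$ from Proposition~\ref{prop:encoded-bosonic-primitives}. Two minor points: assumption 2 provides $\ket{v}$ and $\ket{w}$ separately, not $v\pm iw$, so use a two-term LCU. Also, the coefficient of $\hat A_j$ is $(v_j-iw_j)/\sqrt2$, so your $\alpha$ and $\beta$ are interchanged.

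The second gap is the embedding in Step 2. The state $\ket{\alpha;m}\ket{\emptyset}^{\otimes d}$ is not permutation-symmetric on $m+d$ registers, so it is not $V_{m+d}\ket{\alpha}$. The primitives have bosonic matrix elements only on the symmetric subspace. Already for $d=1$ one gets $\bra{\alpha+e_j;m}\bra{\emptyset}\,\hat A_j^\dagger\,\ket{\alpha;m}\ket{\emptyset}=\sqrt{(\alpha_j+1)(m-|\alpha|)/(m-|\alpha|+1)}$ instead of $\sqrt{\alpha_j+1}$.

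You need an explicit re-symmetrizing isometry $\ket{\alpha;m}\ket{\emptyset}\mapsto\ket{\alpha;m+1}$, applied $d$ times at the input and inverted at the output. The same gadget provides it. Apply the empty-slot superposition to $\ket{\alpha;m+1}$, then swap register $r$ with register $m+1$. This gives exactly $(m+1)^{-1/2}\sum_{r=1}^{m+1}\ket{r}\otimes\ket{\alpha;m}\ket{\emptyset}$. Running this circuit backwards is therefore the required isometry, with no loss of normalization; a sorting-network symmetrization would also work. The paper's proof also passes silently from $M$ registers back to $m$.

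With these two repairs, the rest of your bookkeeping goes through: the projector from $\hat N_\emptyset\ge d$, the products, the final LCU, and the gate count.
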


At this stage we have shown that any structured constant-degree polynomial of position and momentum operators over exponentially many modes admits an efficient block encoding. 
Equivalently, a structured constant-degree polynomial in multiplication by $x_j$ and differentiation with respect to $x_j$ over exponentially many variables admits an efficient block encoding for its Hermite representation. 
In the next subsection, we show how relevant information may be extracted from such encoded states.

\subsection{Readout: observables and marginals}
\label{subsec:readout-observables-marginals}
In this subsection, we ask what useful quantities can be extracted efficiently given an encoded state
$$
\ket{\psi_m}
=
\sum_{|\alpha|\leq m} c_\alpha \ket{\alpha}\,.
$$
When $N$ is exponentially large, even writing down one value for every coordinate requires exponential resources.
We can nevertheless estimate low-order moments of a fixed number of coordinates and, when $\ket{\psi_m}$ represents a probability amplitude, reconstruct constant-dimensional marginal distributions.
We show that the following outputs can be obtained efficiently, and we provide details in \Cref{app:readout-observables-marginals}.
\begin{proposition}[Accessible readout]
\label{prop:accessible-readout}
Assume that $\ket{\psi_m}$ can be prepared efficiently. Then the following quantities are efficiently accessible.
\begin{enumerate}
    \item expectation values $\langle Q(\hat q,\hat p)\rangle$ of structured polynomial observables $Q$ as in \Cref{def:structured-operator-polynomial};
    \item expectations of the number operator $\langle \hat N\rangle$;
    \item marginal distributions of any fixed number of coordinates when $|\psi_m(x)|^2$ is a probability density.
\end{enumerate}
\end{proposition}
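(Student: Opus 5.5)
For item 1, I will combine \Cref{thm:structured-polynomial-block-encoding} with a Hadamard-test or amplitude-estimation routine. Given a block encoding $U_Q$ of $V_m\Pi_m Q\Pi_m V_m^\dagger/\alpha_Q$ with $\alpha_Q\in O(C_Q(m+d)^{d/2})$, a controlled-$U_Q$ Hadamard test on $\ket{0}_{\mathrm{anc}}\ket{\psi_m}$ has acceptance bias equal to $\operatorname{Re}\bra{\psi_m}V_m\Pi_m Q\Pi_m V_m^\dagger\ket{\psi_m}/\alpha_Q$. Adding a phase gate gives the imaginary part. Since $\ket{\psi_m}\in\mathcal H_{\le m}$, the projections act trivially on it, so this equals $\langle Q\rangle/\alpha_Q$.

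Amplitude estimation will give additive precision $\epsilon$ using $O(\alpha_Q/\epsilon)$ calls to the state preparation and to $U_Q$. Each call costs $\operatorname{poly}(m,n)$ gates, so the total cost is $\operatorname{poly}(m,n)/\epsilon$.

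\textbf{Caveat.} The quantity estimated is the truncated expectation $\bra{\psi_m}\Pi_m Q\Pi_m\ket{\psi_m}$, not the untruncated one. The two differ only through components of $Q\psi_m$ outside $\mathcal H_{\le m}$. I will state the result for the truncated expectation. If needed, $m$ can be enlarged by $d$, so that $\Pi_{m+d}Q\Pi_m=Q\Pi_m$ and the truncated and full expectations coincide for states supported on degree $\le m$.

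For item 2, $\hat N=mI-\hat N_\emptyset$ is diagonal in the computational basis of the registers. It can therefore be measured directly: measure every register and count the non-$\emptyset$ labels. The sample mean then estimates $\langle\hat N\rangle$ with variance at most $m^2$, so $O(m^2/\epsilon^2)$ repetitions suffice.

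Item 3 is the main obstacle, because marginals are not polynomial observables. Fix a set $S$ of $k=O(1)$ coordinates. The marginal $\rho_S(y)=\int|\psi_m(y,x_{\bar S})|^2dx_{\bar S}$ equals $\sum_{\beta,\beta'}\sigma_{\beta\beta'}h_\beta(y)h_{\beta'}(y)$, where $\sigma$ is the reduced density matrix of $\psi_m$ on modes $S$ in the Fock picture. The plan is to estimate this reduced state, which has dimension $\binom{k+m}{m}=O(m^k)$, i.e.\ polynomial.

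The entries $\sigma_{\beta\beta'}=\operatorname{tr}(\rho\,\ketbra{\beta'}{\beta}_S\otimes I)$ are expectations of normally ordered monomials $\prod_{j\in S}\hat a_j^{\dagger\,\beta'_j}\hat a_j^{\beta_j}$ divided by combinatorial factors. I will instead rewrite $\ketbra{\beta'}{\beta}_S$ restricted to occupation $\le m$ as a finite polynomial in $\hat a_j,\hat a_j^\dagger,\hat n_j$ for $j\in S$, with degree at most $2m$. Its block encoding is a product of primitive block encodings from \Cref{prop:encoded-bosonic-primitives} for only $k$ fixed modes (each $v=e_j$, trivially preparable), combined via LCU. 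The degree is not constant here, so \Cref{thm:structured-polynomial-block-encoding} does not directly apply. I therefore expect the normalization bookkeeping to be the delicate point.

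A simpler route is direct measurement. Map the register state to Fock occupations of modes in $S$ by a reversible counting circuit: for each $j\in S$, coherently count the registers holding label $j$. Then perform tomography of the $O(m^k)$-dimensional reduced state on those counter registers after tracing out everything else. This requires $\operatorname{poly}(m^k)/\epsilon^2$ samples and yields $\rho_S$ in the Hermite basis. I will adopt this route. The key check is that coherences between different $\beta$ are preserved. This holds because the complementary information, namely the positions and labels in $\bar S$, is traced out consistently, and the permutation-symmetric encoding identifies all orderings. Uncomputing the positional information via a symmetrization argument is the step I expect to require the most care.
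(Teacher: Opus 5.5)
Your items 1 and 2 match the paper: a Hadamard test or amplitude estimation on the block encoding from \Cref{thm:structured-polynomial-block-encoding}, and direct measurement of $\hat N=mI-\hat N_\emptyset$ by counting non-empty registers. Your caveat in item 1 is unnecessary, since $\psi_m=\Pi_m\psi_m$ gives $\langle\psi_m,Q\psi_m\rangle=\langle\psi_m,\Pi_mQ\Pi_m\psi_m\rangle$ exactly. Your reduction of item 3 to the reduced Fock matrix $\sigma$ of the modes in $S$ is also what the paper does, via the reduced Hermite matrix $\Gamma^{(j)}$.

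The gap is in the route you adopt to estimate $\sigma$. Your counting circuit \emph{copies} the occupations of the modes in $S$ into ancillas, but the $S$-labels remain in the registers. Every string in $\mathcal P_\alpha$ has the same counts, so the state becomes
\[ \sum_{|\alpha|\le m} c_\alpha\, |\alpha;m\rangle_Q\otimes|\alpha_S\rangle_{\mathrm{cnt}} . \]
The $|\alpha;m\rangle_Q$ are orthonormal, so tracing out the registers leaves $\sum_\alpha |c_\alpha|^2\,|\alpha_S\rangle\langle\alpha_S|$. The counters are completely dephased, and tomography returns only the diagonal $\sigma_{\beta\beta}$. The reconstruction $\sum_\beta\sigma_{\beta\beta}h_\beta(y)^2$ is then even in every $y_j$ and is not the marginal. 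For example, take the normalized superposition of the vacuum and one boson in mode $1$, with $S=\{1\}$: the true marginal has mean $1/\sqrt2$, while yours has mean $0$.

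Your justification that only positions and $\bar S$-labels are traced out is therefore false, and symmetrizing positions does not fix it. You would need a coherent \emph{transfer} $|\alpha;m\rangle_Q\mapsto|\alpha_S\rangle\otimes|\chi(\alpha_{\bar S})\rangle$ that actually removes the $S$-bosons from the registers. Simply overwriting $S$-labels by $\emptyset$ is not injective on basis strings. On the symmetric subspace it also requires an $\alpha$-dependent renormalization of the same kind as the $N_\emptyset^{-1/2}$ factors in $A_j$. Your plan supplies none of this.

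The paper instead takes your route (a). It estimates each entry $\Gamma^{(j)}_{k\ell}=\langle\psi_m,M^{(j)}_{k\ell}\psi_m\rangle$ by Hadamard tests, using $M^{(j)}_{k\ell}=(A_j^\dagger)^\ell P^{(j)}_0A_j^k/\sqrt{k!\,\ell!}$. Real and imaginary parts come from $\tfrac12(M_{k\ell}+M_{\ell k})$ and $\tfrac1{2i}(M_{k\ell}-M_{\ell k})$. Inside $P^{(j)}_0$ the count $N_j$ is computed, tested for zero, and uncomputed, so no occupation record is left behind to destroy coherences. The paper then shows that entrywise accuracy $\delta$ gives an $L^1$ marginal error of at most $(m+1)^2\delta$.

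Your worry about normalization in that route is legitimate. The paper asserts $\operatorname{poly}(m)$ normalization for $M^{(j)}_{k\ell}$ without detailed accounting, and a naive product of $k+\ell$ ladder block encodings does not obviously achieve it. Still, it is the route that actually accesses the off-diagonal entries the marginal requires.
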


The expectation of the number operator is especially useful as a truncation diagnostic. 
The expectation $\langle\hat N\rangle$ measures the average total Hermite degree. 
A value close to $m$ indicates that the cut-off may be too small.

\section{Fokker--Planck equation and Bayesian inverse problems}
\label{sec:fokker-planck}
In this section, we apply the Hermite truncation and qubit encoding developed above to Bayesian inverse problems. 
We first recall the standard transformation that maps a reversible Fokker--Planck operator to a non-negative Hermitian operator whose ground state is the square root of its stationary density. 
We then show that, after whitening a Gaussian prior, a Bayesian posterior can be expressed in this form. 
We then identify conditions under which the resulting Hamiltonian is a structured polynomial operator that can be block encoded efficiently, and analyse the cost of preparing a quantum state whose position-space amplitude is the square root of the posterior density.

\subsection{Fokker--Planck equations and Gibbs-preserving drifts}
\label{subsec:general-fokker-planck}

The material in this subsection is standard, found for example in~\cite{risken1996fokker-planck-aa7}, we restate it using conventions of this work. 
We consider Fokker--Planck equations of the form
\begin{equation}
\label{eq:general-fokker-planck}
\partial_t \rho
=
\nabla\cdot(\rho F)+\Delta \rho,
\end{equation}
where $F:\mathbb R^N\to\mathbb R^N$. With this convention, the drift of the corresponding stochastic differential equation is $-F$.
Following the decomposition described in~\cite[Sec.~6.3]{risken1996fokker-planck-aa7}, suppose that
\begin{equation}
\label{eq:gibbs-preserving-decomposition}
F=\nabla V+G,
\qquad
\nabla\cdot(e^{-V}G)=0,
\end{equation}
where the potential $V$ satisfies
\begin{equation}
Z:=\int_{\mathbb R^N}e^{-V(x)}\,dx<+\infty.
\end{equation}
Then
\begin{equation}
\label{eq:gibbs-stationary-density}
\rho_\infty(x):=Z^{-1}e^{-V(x)}
\end{equation}
is a stationary density of \Cref{eq:general-fokker-planck}. Indeed,
\begin{equation}
\nabla\cdot(\rho_\infty F)+\Delta\rho_\infty
=
Z^{-1}\nabla\cdot(e^{-V}G)
=
0.
\end{equation}

The decomposition in \Cref{eq:gibbs-preserving-decomposition} separates the gradient drift $\nabla V$ from a transport field $G$ that preserves the same stationary density. 
The field $G$ can modify the transient evolution and the rate of convergence to equilibrium, but it does not change $\rho_\infty$. 
In the next subsection, we recall the classical transformation under which the gradient contribution becomes imaginary-time evolution generated by a non-negative Hermitian operator, while the contribution from $G$ becomes anti-Hermitian.

After translating sign conventions, the setting of Bravyi et al.~\cite{bravyi2025quantum-101} corresponds to a diagonal quadratic potential of the following form, with $q>0$ and $\lambda_i>0$,
\begin{equation}
V_{\text{Bravyi}}(x)=q^{-1}\sum_i\lambda_i x_i^2\,.
\end{equation}
They consider a quadratic transport field $G$ satisfying a divergence-free condition corresponding to the Gibbs-preserving condition used here.
Their invariant density is therefore the explicitly known Gaussian
\begin{equation}
\rho_\infty(x)
\propto
\exp\left(-q^{-1}\sum_i\lambda_i x_i^2\right),
\end{equation}
and their analysis focuses on the transient dynamics. 
They apply the corresponding transformation to the backward Kolmogorov equation for observables, whereas we use the forward Fokker--Planck equation for densities. 
Their truncation similarly retains a low-occupation sector of the bosonic space. 
Our framework allows the stationary potential to extend beyond diagonal quadratic forms, subject to the structured-polynomial assumptions introduced above, and uses a symmetric-register encoding that supports modular block encodings of general fixed-degree polynomial operators.

\subsection{Ground-state transform}
We now recall the standard change of variables that maps the stationary density to the ground state of a Hermitian operator~\cite[Sec.~6.3]{risken1996fokker-planck-aa7}. 
Let
\begin{equation}
\phi(x):=\sqrt{\rho_\infty(x)}=Z^{-1/2}e^{-V(x)/2},
\end{equation}
and define
\begin{equation}
\psi(t,x):=\frac{\rho(t,x)}{\phi(x)}\,.
\end{equation}
Under this change of variables, the Fokker--Planck equation becomes
\begin{equation}
\partial_t\psi=-H_V\psi+iH_G\psi,
\label{eq:ground-state-transformed-equation}
\end{equation}
where $H_V$ is the Hermitian operator associated with the gradient drift $\nabla V$, while $iH_G$ is the anti-Hermitian operator associated with the transport field $G$:
\begin{align}
H_V
&=
-\Delta+\frac14|\nabla V|^2-\frac12\Delta V,
\\
iH_G
&=
G\cdot\nabla+\frac12(\nabla\cdot G).
\end{align}
The derivation is given in \Cref{app:ground-state-transform-fokker-planck}.
This change of variables differs from that of \cite{wu2026from-49e}, where the central object is a mixed state over Brownian realisations of $\sqrt{\rho(x,t)}$.

Moreover, $H_V$ is non-negative because it factorizes as
\begin{equation}
\label{eq:fp-hamiltonian-factorization}
H_V
=
\sum_{j=1}^N
\left(
-\partial_{x_j}
+
\frac12\partial_{x_j}V
\right)
\left(
\partial_{x_j}
+
\frac12\partial_{x_j}V
\right)\,.
\end{equation}
Together with $H_V\phi=0$, this shows that $\phi$ is a ground state of $H_V$.
\label{subsec:ground-state-transform}

\subsection{Potentials with a $\|x\|^2$ term}
When $V(x)=\|x\|^2$, the potential combines with the diffusion term to yield the number operator:
\begin{equation}
\label{eq:quadratic-potential-number-operator}
H_V
=
-\Delta+\|x\|^2-N
=
2\hat n.
\end{equation}
Therefore, the Hermite basis diagonalizes $H_V$.

More generally, suppose that
\begin{equation}
\label{eq:potential-quadratic-plus-perturbation}
V(x)=\|x\|^2+F_0(x),
\qquad
F_0(x)\geq 0.
\end{equation}
The non-negativity of $F_0$ ensures that $e^{-V(x)}$ decays at least as fast as a Gaussian, and hence that the normalization constant $Z$ is finite. In this case,
\begin{equation}
\label{eq:H-number-plus-potential}
H_V
=
2\hat n+W_{F_0}(\hat q),
\end{equation}
where
\begin{equation}
\label{eq:WF0-definition}
W_{F_0}(x)
=
x\cdot\nabla F_0(x)
+
\frac14|\nabla F_0(x)|^2
-
\frac12\Delta F_0(x).
\end{equation}
The appearance of the number operator in $H_V$ motivates the use of the Hermite basis and gives intuition for why the global Hermite truncation can converge. In the dynamics picture, the term $2\hat n$ penalizes components with larger boson number, causing them to decay more rapidly under imaginary-time evolution. In the static picture, it assigns a larger energy to states with high boson number.

\subsection{Bayesian inverse problems as ground-state problems}
\label{subsec:bayesian-inverse-problems-ground-state}
We now connect the Fokker--Planck equation to Bayesian inverse problems~\cite{stuart2010inverse-b3f}. 
Let $u\in\mathbb R^N$ be an unknown parameter and $y\in\mathbb R^M$ the observed variables.
We are given a forward map $g:\mathbb R^N\to\mathbb R^M$ that predicts the observations from the unknown parameter.
We assume $u$ has a Gaussian prior and the observations are affected by Gaussian noise.
\begin{equation}
\label{eq:bayesian-observation-model}
y=g(u)+\eta
\,,
\eta\sim \mathcal N(0,\Gamma)
\,,
u\sim \mathcal N(\mu,\Sigma).
\end{equation}
We introduce rescaled whitened coordinates $x$ so that the prior density is proportional to $e^{-\|x\|^2}$:
\begin{equation}
\label{eq:whitened-coordinate}
u=\mu+\sqrt 2\,\Sigma^{1/2}x\,.
\end{equation}
Bayes' formula gives the posterior density
\begin{equation}
\label{eq:posterior-density-whitened}
\pi(x|y)
=
Z_y^{-1}e^{-V_y(x)},
\quad
V_y(x)
=
\|x\|^2+F_y(x),
\end{equation}
where
\begin{equation}
\label{eq:inverse-problem-potential}
F_y(x)
=
\frac12
\left\|\Gamma^{-1/2}(
y-g(\mu+\sqrt 2\,\Sigma^{1/2}x))
\right\|^2.
\end{equation}
Thus the Gaussian prior yields the $\|x\|^2_2$ term of the potential, and the observation fit yields the perturbation.

From here the derivation follows as in the previous subsection with $F_y(x)$ in the place of $F_0(x)$.
Defining
\begin{equation}
\phi_y(x):=\sqrt{\pi(x|y)},
\end{equation}
for any observable $f$, posterior expectations can be written as
\begin{equation}
\label{eq:posterior-expectation-ground-state}
\mathbb E[f(x)]
=
\int f(x)\pi(x|y)\,dx
=
\langle \phi_y,f(\hat q)\phi_y\rangle .
\end{equation}
Thus posterior computation can be reduced to preparing or approximating $\phi_y$ and estimating observables in this state. 
The global Hermite truncation gives the finite-dimensional approximation, while the block-encoding results of \Cref{subsec:block-encoding-polynomial-operators} apply whenever $W_y$ has a structured polynomial representation.

\subsection{Asymptotic complexity}
\label{subsec:asymptotics-bayes}
We now combine the truncation result of \Cref{subsec:convergence}, the block encoding of \Cref{subsec:block-encoding-polynomial-operators}, and a standard ground-state-preparation algorithm to estimate the overall complexity.

Let
\begin{equation}
H_y:=2\hat n+W_y(\hat q)\,.
\end{equation}
Assume that $W_y(\hat q)$ has a structured operator-polynomial representation of degree $d_W\in O(1)$ in the sense of \Cref{def:structured-operator-polynomial}. A sufficient condition on the forward map $g$ is given in \Cref{app:structured-forward-map-posterior-hamiltonian}.

Since $H_y$ is a polynomial of degree
\begin{equation}
d:=\max\{d_W,2\},
\end{equation}
there exists a constant $C_H$ such that
\begin{equation}
\|H_yv\|
\leq
C_H\|v\|_{d,\mathrm{osc}}.
\end{equation}
For polynomial $F_y$, the ground state has finite oscillator regularity of every finite order. For some $s>0$, define
\begin{equation}
\|\phi_y\|_{s+d,\mathrm{osc}}
:=
C_{s+d}
<
+\infty.
\end{equation}
Although this quantity is finite its dependence on the problem size is instance-dependent. 
An efficient algorithm additionally requires $C_H$ and $C_{s+d}$ to grow at most polynomially in $\log N$.

Let $\epsilon>0$ be the target state-preparation error. We also assume that a lower bound $\gamma>0$ on the spectral gap of $H_y$ is known, and that an efficiently preparable initial state $\psi_{\mathrm{init}}$ satisfies
\begin{equation}
|\langle\psi_{\mathrm{init}},\phi_y\rangle|
\geq
\xi.
\end{equation}
We must choose a truncation order $m$ such that 
$$
\|\phi_y-\phi_{y,m}\|\le \epsilon,
$$
$$
\gamma_m\ge \gamma/2,
$$
$$
|\langle \psi_{\rm init},\phi_{y,m}\rangle|\ge \xi/2.
$$
Using \Cref{thm:statics-convergence}, this is ensured by requiring
$$
E_m \le \frac{\gamma}{8}\min\{1,\epsilon^2,\xi^2\},
$$
where
$$
E_m = 2C_HC_{s+d}(m+1)^{-s/2}.
$$
We choose the truncation level accordingly
\begin{equation}
m_\epsilon
:=
\left\lceil
\left(
\frac{16C_HC_{s+d}}
{\gamma\min\{1,\epsilon^2,\xi^2\}}
\right)^{2/s}
\right\rceil .
\label{eq:choice-of-m-ground-state}
\end{equation}
Using \Cref{thm:structured-polynomial-block-encoding} for our choice of $m_{\epsilon}$ we have a block encoding of gate complexity $G(m_\epsilon) \in \poly(m_\epsilon)$ and normalization
\begin{equation}
\alpha_{m_\epsilon}
\in
\Theta\!\left(C_P(m_\epsilon+d)^{d/2}\right).
\label{eq:posterior-hamiltonian-normalization}
\end{equation}

Then using optimal scaling of ground state preparation from~\cite{lin2020near-optimal-4ea}, $\phi_y$ can be prepared to accuracy $\epsilon$ with gate complexity
\begin{equation}
\widetilde O
\left(
G(m_\epsilon)C_P\,
\frac{(m_\epsilon+d)^{d/2}}{\gamma\,\xi}
\log\frac{1}{\epsilon}
\right).
\label{eq:posterior-ground-state-preparation-complexity}
\end{equation}
The space complexity, i.e. the required number of qubits is
\begin{equation}
\tilde\Theta(m_\epsilon\log N).
\label{eq:posterior-ground-state-space-complexity}
\end{equation}
In addition, information about this ground state can be extracted efficiently according to \Cref{prop:accessible-readout}.
Although the algorithm scales poly-logarithmically with $N$ when $C_{s+d}$ and $C_H$ are polynomial in $\log N$, its overall complexity follows a power law in $1/\epsilon$.

\subsection{Spectral gap and hardness}
\label{subsec:spectral-gap-hardness}

\Cref{thm:structured-polynomial-block-encoding} gives conditions under which the posterior Hamiltonian can be block encoded efficiently. 
This is not sufficient for an efficient inverse-problem algorithm: ground-state preparation also depends on the spectral gap and on the overlap with an efficiently preparable initial state~\cite{lin2020near-optimal-4ea}.

When the forward map $g$ is linear, the spectral gap is bounded below, but the inverse problem is also classically tractable, as shown in \Cref{app:linear-gaussian-classical-reduction}.
When $g$ is quadratic, the data-misfit term $F_y$ is quartic and the posterior may have several local minima.
For the class of quadratic-square potentials considered in \Cref{app:quadratic-square-spectral-gap}, we prove that the corresponding Hamiltonian has a strictly positive spectral gap for every fixed instance.
However, this is only an existence result and does not provide a lower bound that remains polynomially large as the problem size increases.
When the minima are separated by large barriers, the gap may be exponentially small in the barrier height~\cite{helffer2005hypoelliptic-658,bovier2005metastability-771}.

This limitation is unavoidable in general.
In \Cref{app:ising-hardness-quartic-potentials}, we show that quartic potentials can concentrate the posterior near the vertices of a hypercube and reproduce the probability distribution of a discrete spin system.
In that regime, the posterior normalization constant approximates an Ising partition function, while some posterior expectations approximate Ising Gibbs observables. 
Exact evaluation of Ising partition functions is $\#$P-hard in broad parameter regimes, and approximation with external fields can also be hard~\cite{jaeger1990computational-6cf,goldberg2006complexity-d2d}.
One should therefore not expect a polynomial-time algorithm for arbitrary quartic inverse problems.
Efficient block encoding is a property of the Hamiltonian representation; efficient state preparation additionally requires a sufficiently large spectral gap and a suitable preparation strategy.

We now give an additional assumption under which both difficulties can be addressed.
Suppose that there exists $\beta<2$ such that, for all $x\in\mathbb R^N$,
\begin{equation}
\nabla^2F_y(x)\succeq-\beta I.
\end{equation}
Define the interpolating potential
\begin{equation}
V_\tau(x):=\|x\|^2+\tau F_y(x),
\qquad
0\leq\tau\leq1,
\end{equation}
and let $H_\tau$ be the corresponding Hamiltonian.
As proved in \Cref{app:hessian-lower-bound-gap},
\begin{equation}
\gamma(H_\tau)\geq2-\beta\tau\geq2-\beta=:\kappa>0.
\end{equation}
The spectral gap therefore remains open along the entire path.
At $\tau=0$, the ground state is the bosonic vacuum $\ket{0}$.
The posterior state can consequently be prepared by evolving adiabatically from $\tau=0$ to $\tau=1$~\cite{albash2018adiabatic-e7f}, with polynomial dependence on $1/\kappa$ and $1/\epsilon$, in addition to the cost of implementing the Hamiltonian path.

\acknowledgments
AB thanks Matteo Lostaglio for crucial feedback and the adiabatic state preparation idea. 
AB thanks Paul Mannix for extensive feedback. 

\printbibliography

\clearpage
\onecolumn
\appendix
\section{Truncation estimates for the global Hermite cut-off}
\label{app:global-hermite-truncation}

\subsection{Ordering convention for polynomial differential operators}
\label{app:ordering-polynomial-differential-operators}

Note that the operators $\hat q_j$ and $\hat p_j$ do not commute.
Therefore, we need to fix an ordering convention. Let
$$
L
=
\sum_{|\beta|\le r}
a_\beta(x)\partial^\beta
$$
be a finite-order differential operator with polynomial coefficients. Since multiplication by $x_j$ is $\hat q_j$ and $\partial_{x_j}=i\hat p_j$, one may choose the ordering in which all position operators are placed to the left of all momentum operators:
\begin{equation}
\label{eq:ordered-polynomial-qp}
P(\hat q,\hat p)
=
\sum_{|\beta|\le r}
a_\beta(\hat q)(i\hat p)^\beta .
\end{equation}
Then $L=P(\hat q,\hat p)$ on $\mathcal S(\mathbb R^N)$. 

\subsection{Hermite tails}
\label{app:hermite-tail-proof}
We show that functions with a bounded oscillator Sobolev norm as in \Cref{eq:oscillator-sobolev-norm} have controlled Hermite expansion tails.
\begin{proposition}[Hermite tail bound]
\label{prop:hermite-tail-bound}
Let $s\geq 0$. Then
\begin{equation}
\label{eq:hermite-tail-bound}
\|(I-\Pi_m)u\|
\leq
(m+1)^{-s/2}\|u\|_{s,\mathrm{osc}}.
\end{equation}
\end{proposition}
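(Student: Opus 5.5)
The bound follows from the spectral decomposition of $\hat n$ in the Hermite basis. Every $h_\alpha$ is an eigenvector of $\hat n$ with eigenvalue $|\alpha|$, and $(h_\alpha)_\alpha$ is an orthonormal basis of $L^2(\mathbb R^N)$. Both $I-\Pi_m$ and $(I+\hat n)^{s/2}$ are therefore diagonal in this basis. $\Pi_m$ is the orthogonal projector onto $\mathcal H_{\le m}$ from \Cref{eq:global-hermite-space}, so
\[
(I-\Pi_m)u=\sum_{|\alpha|>m}c_\alpha h_\alpha,\qquad c_\alpha=\langle h_\alpha,u\rangle.
\]
By Parseval,
\[
\|(I-\Pi_m)u\|^2=\sum_{|\alpha|\ge m+1}|c_\alpha|^2,\qquad
\|u\|_{s,\mathrm{osc}}^2=\sum_{\alpha}(1+|\alpha|)^{s}|c_\alpha|^2.
\]
The second identity takes $(I+\hat n)^{s/2}$ to be defined by functional calculus on its domain. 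If $u$ is not in that domain, we read the right-hand side as $+\infty$ and there is nothing to prove.

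The key inequality is elementary. For $|\alpha|\ge m+1$ we have $(1+|\alpha|)^{s}\ge (m+2)^s\ge (m+1)^s$, and this step uses $s\ge0$. Hence
\[
\sum_{|\alpha|\ge m+1}|c_\alpha|^2\le (m+1)^{-s}\sum_{|\alpha|\ge m+1}(1+|\alpha|)^s|c_\alpha|^2\le (m+1)^{-s}\|u\|_{s,\mathrm{osc}}^2 .
\]
The last step simply drops the nonnegative terms with $|\alpha|\le m$. Taking square roots gives \Cref{eq:hermite-tail-bound}. In fact we get the slightly stronger constant $(m+2)^{-s/2}$, but the stated form is what will be used later.

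There is no real obstacle here. The only point needing care is the justification that $\hat n$ is diagonal in the Hermite basis with eigenvalue $|\alpha|$, and that the functional calculus of the self-adjoint operator $\hat n$ matches the coefficientwise definition. Both follow from \Cref{eq:ladder-action-hermite}: it gives $\hat a_j^\dagger\hat a_j h_\alpha=\alpha_j h_\alpha$, and completeness of $(h_\alpha)$ identifies $\hat n$ as the diagonal operator $\sum_\alpha |\alpha|\,\ketbra{\alpha}{\alpha}$ on its maximal domain.
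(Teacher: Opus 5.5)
Your proof is correct and follows the paper's argument essentially verbatim. Both expand in the Hermite basis, bound $(1+|\alpha|)^{-s}\le (m+1)^{-s}$ on the tail $|\alpha|>m$, and drop the nonnegative low-degree terms; your sharper constant $(m+2)^{-s/2}$ and your domain remark are consistent with the paper's note that the bound is only nontrivial for $u$ in the domain of $(I+\hat n)^{s/2}$.
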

Note that this is only a non-trivial when $\|u\|_{s,\mathrm{osc}} < +\infty$, that is  $u$ is in the domain of $(I+\hat n)^{s/2}$.
\begin{proof}
    Write
    $$
    u=\sum_{\alpha\in\mathbb N^N}u_\alpha h_\alpha.
    $$
    Since $\hat n h_\alpha=|\alpha|h_\alpha$, we have
    $$
    \|u\|_{s,\mathrm{osc}}^2
    =
    \sum_{\alpha\in\mathbb N^N}
    (1+|\alpha|)^s |u_\alpha|^2.
    $$
    Moreover,
    $$
    \|(I-\Pi_m)u\|^2
    =
    \sum_{|\alpha|>m}|u_\alpha|^2.
    $$
    For $|\alpha|>m$, one has $(1+|\alpha|)^{-s}\leq (m+1)^{-s}$. 
    Therefore
    $$
    \sum_{|\alpha|>m}|u_\alpha|^2
    =
    \sum_{|\alpha|>m}
    (1+|\alpha|)^{-s}(1+|\alpha|)^s|u_\alpha|^2
    \leq
    (m+1)^{-s}
    \|u\|_{s,\mathrm{osc}}^2.
    $$
    Taking the square root gives \Cref{eq:hermite-tail-bound}.
\end{proof}

\subsection{Polynomial operators and oscillator Sobolev norms}
\label{app:polynomial-operator-control}

We use the following standard estimate.

\begin{lemma}[Polynomial control by the number operator]
\label{lem:polynomial-control-number}
Let $P(\hat q,\hat p)$ be an ordered polynomial of total degree $d$. Then, for every $s\geq 0$, there exists a constant $C_{P,s}$ such that
\begin{equation}
\label{eq:polynomial-control-number}
\|P(\hat q,\hat p)v\|_{s,\mathrm{osc}}
\leq
C_{P,s}\|v\|_{s+d,\mathrm{osc}}
\end{equation}
for every $v$ in the Hermite span.
\end{lemma}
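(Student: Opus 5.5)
Our plan is to reduce to single ladder operators and then use the fact that each ladder operator shifts the total Hermite degree by exactly one. Rewriting $\hat q_j=(\hat a_j+\hat a_j^\dagger)/\sqrt2$ and $\hat p_j=(\hat a_j-\hat a_j^\dagger)/(i\sqrt2)$, the ordered polynomial $P(\hat q,\hat p)$ of total degree $d$ expands into a finite linear combination of words $W=\hat b_{1}\cdots\hat b_{k}$ with $k\le d$. Each letter $\hat b_i$ is some $\hat a_{j}$ or $\hat a_{j}^\dagger$. The constant factors are absorbed into $C_{P,s}$. By the triangle inequality it then suffices to prove, for each such word,
\[
\|W v\|_{s,\mathrm{osc}}\le C_{W,s}\|v\|_{s+k,\mathrm{osc}}\le C_{W,s}\|v\|_{s+d,\mathrm{osc}},
\]
where the last step uses $(1+|\alpha|)^{s+k}\le(1+|\alpha|)^{s+d}$. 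We work with $v$ in the finite Hermite span, so every expression is a finite sum and there are no domain issues.

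For a single letter $b\in\{\hat a_j,\hat a_j^\dagger\}$ we aim to show $\|b w\|_{t,\mathrm{osc}}\le 2^{t/2}\|w\|_{t+1,\mathrm{osc}}$ for every $t\ge0$. Write $w=\sum_\alpha w_\alpha h_\alpha$.

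\emph{Annihilation.} By \Cref{eq:ladder-action-hermite}, the images $\hat a_j h_\alpha$ are orthogonal for distinct $\alpha$ (they are distinct basis vectors or zero). This gives
\[
\|\hat a_j w\|^2_{t,\mathrm{osc}}=\sum_\alpha \alpha_j |\alpha|^t|w_\alpha|^2\le \sum_\alpha(1+|\alpha|)^{t+1}|w_\alpha|^2 ,
\]
using $\alpha_j\le|\alpha|$ and the fact that $\hat a_j h_\alpha$ has degree $|\alpha|-1$.

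\emph{Creation.} The image $\hat a_j^\dagger h_\alpha$ has degree $|\alpha|+1$ and norm $\sqrt{\alpha_j+1}$. This gives
\[
\sum_\alpha(\alpha_j+1)(2+|\alpha|)^t|w_\alpha|^2\le 2^t\sum_\alpha(1+|\alpha|)^{t+1}|w_\alpha|^2 .
\]

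Iterating the single-letter bound $k$ times, with $t=s,s+1,\dots,s+k-1$, yields $\|Wv\|_{s,\mathrm{osc}}\le 2^{\sum_i (s+i)/2}\|v\|_{s+k,\mathrm{osc}}$, which is the word estimate.

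The only step needing care is the orthogonality claim. It holds for a single mode label $j$ because the map $\alpha\mapsto\alpha\pm e_j$ is injective, so no cross terms appear. This is exactly why we reduce to single ladder operators rather than estimating $\hat q_j$ or $\hat p_j$ directly. Summing over words, $C_{P,s}$ is the sum of $|{\rm coefficient}|\cdot C_{W,s}$. This constant depends on $N$ only through the number and size of the coefficients, which is fine since the lemma allows $C_{P,s}$ to depend on $P$. The case $s=0$ recovers the bound $\|Lv\|\le C_P\|v\|_{d,\mathrm{osc}}$ used in \Cref{thm:finite-time-galerkin-truncation}.
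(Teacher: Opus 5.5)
Your proof is correct and follows essentially the same route as the paper: expand $\hat q_j,\hat p_j$ into ladder operators, prove the one-step weighted bound $\|\hat a_j^{(\dagger)}w\|_{t,\mathrm{osc}}\le C_t\|w\|_{t+1,\mathrm{osc}}$, iterate over words of length at most $d$, and absorb the finite linear combination into the constant. Your version makes explicit two points the paper covers only by ``the same argument with the weight inserted'': the orthogonality of the images, which follows from injectivity of $\alpha\mapsto\alpha\pm e_j$, and the constant $2^{t/2}$, which comes from $2+|\alpha|\le 2(1+|\alpha|)$.
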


\begin{proof}
It is enough to prove the estimate for monomials in $\hat q_j$ and $\hat p_j$. Each $\hat q_j$ or $\hat p_j$ is a linear combination of $\hat a_j$ and $\hat a_j^\dagger$. Thus it suffices to control products of creation and annihilation operators.
For a Hermite expansion $v=\sum_\alpha v_\alpha h_\alpha$,
$$
\hat a_j v
=
\sum_{\alpha:\alpha_j\geq 1}
\sqrt{\alpha_j}\,v_\alpha h_{\alpha-e_j}.
$$
Since $\alpha_j\leq |\alpha|$, this gives
$$
\|\hat a_j v\|
\leq
\|\hat n^{1/2}v\|
\leq
\|v\|_{1,\mathrm{osc}}.
$$
Similarly,
$$
\hat a_j^\dagger v
=
\sum_\alpha
\sqrt{\alpha_j+1}\,v_\alpha h_{\alpha+e_j},
$$
and $\alpha_j+1\leq |\alpha|+1$, so
$$
\|\hat a_j^\dagger v\|
\leq
\|(I+\hat n)^{1/2}v\|.
$$
The same argument with the weight $(I+\hat n)^{s/2}$ inserted gives
$$
\|\hat a_j v\|_{s,\mathrm{osc}}
+
\|\hat a_j^\dagger v\|_{s,\mathrm{osc}}
\leq
C_s\|v\|_{s+1,\mathrm{osc}}.
$$
Iterating this estimate over a product of at most $d$ ladder operators gives Eq.~\eqref{eq:polynomial-control-number}. A finite linear combination only changes the constant.
\end{proof}

\subsection{Finite-time Galerkin error}
\label{app:finite-time-galerkin-proof}

We prove \Cref{thm:finite-time-galerkin-truncation}. 
Let $L=P(\hat q,\hat p)$ be an ordered polynomial of total degree $d$. 
By \Cref{lem:polynomial-control-number},
\begin{equation}
\label{eq:polynomial-control-number-unweighted}
\|Lv\|
\leq
C_P\|v\|_{d,\mathrm{osc}}
\end{equation}
on the Hermite span. More generally, with oscillator weights inserted,
\begin{equation}
\label{eq:weighted-polynomial-control-number}
\|Lv\|_{s,\mathrm{osc}}
\leq
C_{P,s}\|v\|_{s+d,\mathrm{osc}}.
\end{equation}
This follows from the fact that each $\hat q_j$ or $\hat p_j$ is a linear combination of $\hat a_j$ and $\hat a_j^\dagger$, and hence changes total Hermite degree by at most one.

Assume stability, that is, the projected dynamics satisfy
\begin{equation}
\label{eq:projected-stability}
\|e^{tL_m}\|\leq C_T,
\qquad
0\leq t\leq T,
\end{equation}
with $C_T$ independent of $m$. 
Define the error $e_m$ between the solution to the dynamics in the projected space $u_m$ and the solution to the dynamics in the full space projected on $\mathcal{H}_{\leq m}$ to be
$$
e_m(t)=\Pi_m u(t)-u_m(t).
$$
Since $u$ solves $\partial_t u=Lu$ and $u_m$ solves $\partial_t u_m=L_m u_m$, we have that the error follows the same ODE as $u_m$, plus an additional term of high occupation numbers components of $u$ feeding into $\mathcal{H}_{\leq m}$. 
\begin{align}
\partial_t e_m(t)
&=
\Pi_mLu(t)-L_m u_m(t) \nonumber\\
&=
L_m e_m(t)+\Pi_mL(I-\Pi_m)u(t).
\end{align}
Moreover $e_m(0)=0$. Duhamel's formula gives
\begin{equation}
\label{eq:duhamel-galerkin-error}
e_m(t)
=
\int_0^t
e^{(t-\tau)L_m}
\Pi_mL(I-\Pi_m)u(\tau)\,d\tau .
\end{equation}
Therefore,
\begin{equation}
\label{eq:em-bound-before-tail}
\|e_m(t)\|
\leq
C_T
\int_0^t
\|L(I-\Pi_m)u(\tau)\|\,d\tau .
\end{equation}
Using \Cref{eq:polynomial-control-number-unweighted},
$$
\|L(I-\Pi_m)u(\tau)\|
\leq
C_P\|(I-\Pi_m)u(\tau)\|_{d,\mathrm{osc}}.
$$
Since $\Pi_m$ commutes with $\hat n$,
$$
\|(I-\Pi_m)u(\tau)\|_{d,\mathrm{osc}}
=
\|(I-\Pi_m)(I+\hat n)^{d/2}u(\tau)\|.
$$
Applying \Cref{prop:hermite-tail-bound} to $(I+\hat n)^{d/2}u(\tau)$ gives
$$
\|(I-\Pi_m)u(\tau)\|_{d,\mathrm{osc}}
\leq
(m+1)^{-s/2}
\|u(\tau)\|_{s+d,\mathrm{osc}}.
$$
Hence
\begin{equation}
\label{eq:em-bound-after-tail}
\|e_m(t)\|
\leq
C_T C_P t
(m+1)^{-s/2} C_{s+d}
\end{equation}
Finally,
$$
u(t)-u_m(t)
=
(I-\Pi_m)u(t)+e_m(t).
$$
Using \Cref{prop:hermite-tail-bound} once more,
\begin{equation}
\label{eq:finite-time-galerkin-detailed}
\|u(t)-u_m(t)\|
\leq
(m+1)^{-s/2}
\left(
\|u(t)\|_{s,\mathrm{osc}}
+
C_T C_P t
\sup_{0\leq \tau\leq T}
\|u(\tau)\|_{s+d,\mathrm{osc}}
\right).
\end{equation}
Since $\|u(t)\|_{s,\mathrm{osc}}\le \|u(t)\|_{s+d,\mathrm{osc}}$, the same constant $C_s$ controls both terms.

The assumption in \Cref{eq:projected-stability} and the boundedness of the oscillator Sobolev norms are not automatic for arbitrary polynomial expressions $P(\hat q,\hat p)$. 
Polynomial bosonic operators may generate flows whose high-order oscillator norms grow rapidly, or even develop finite-time singularities at the level of the associated classical or Heisenberg dynamics. 
For example, for
$$
H=\frac{1}{2}(\hat p\hat q^2+\hat q^2\hat p),
$$
the Heisenberg equation gives
$$
\frac{d}{dt}\hat q(t)
=
i[H,\hat q(t)]
=
\hat q(t)^2.
$$
The corresponding classical equation $\dot q=q^2$ has finite-time blow-up. This does not contradict unitarity of $e^{-itH}$ when $H$ is self-adjoint; it says that polynomial observables and oscillator Sobolev norms need not remain uniformly controlled. This is why \Cref{thm:finite-time-galerkin-truncation} is stated with explicit stability and regularity assumptions.

\subsection{Ground-state consistency}
\label{app:ground-state-consistency}

We prove Theorem~\ref{thm:statics-convergence}. Let
$$
\delta_m:=\|(I-\Pi_m)\phi\|.
$$
By Proposition~\ref{prop:hermite-tail-bound} and the assumption on $\phi$, $\delta_m\to 0$. Hence, for all sufficiently large $m$,
$\Pi_m\phi\neq 0$ and $\|\Pi_m\phi\|\ge 1/2$. Define the normalized trial state
$$
\widetilde \phi_m
:=
\frac{\Pi_m\phi}{\|\Pi_m\phi\|}.
$$
By the variational principle,
\begin{equation}
0\le \lambda_{0,m}
\le
\langle \widetilde\phi_m,H\widetilde\phi_m\rangle .
\label{eq:lambda0-variational}
\end{equation}
Since $H\phi=0$, we have
$$
H\Pi_m\phi
=
-H(I-\Pi_m)\phi .
$$
Therefore
\begin{align}
\langle \widetilde\phi_m,H\widetilde\phi_m\rangle
&=
\frac{\langle \Pi_m\phi,H\Pi_m\phi\rangle}{\|\Pi_m\phi\|^2}  \\
&\le
\frac{\|H(I-\Pi_m)\phi\|}{\|\Pi_m\phi\|}.
\label{eq:trial-energy-bound}
\end{align}
Using the oscillator-control assumption \eqref{eq:H-osc-control},
$$
\|H(I-\Pi_m)\phi\|
\le
C_H\|(I-\Pi_m)\phi\|_{d,\mathrm{osc}} .
$$
Since $\Pi_m$ commutes with $\hat n$,
$$
\|(I-\Pi_m)\phi\|_{d,\mathrm{osc}}
=
\|(I-\Pi_m)(I+\hat n)^{d/2}\phi\|.
$$
Applying Proposition~\ref{prop:hermite-tail-bound} to $(I+\hat n)^{d/2}\phi$ gives
\begin{equation}
\|(I-\Pi_m)\phi\|_{d,\mathrm{osc}}
\le
(m+1)^{-s/2}\|\phi\|_{s+d,\mathrm{osc}} .
\label{eq:tail-d-osc-bound}
\end{equation}
For all sufficiently large $m$, $\|\Pi_m\phi\|\ge 1/2$. Combining the previous estimates gives
\begin{equation}
0\le \lambda_{0,m}
\le
2C_H\|\phi\|_{s+d,\mathrm{osc}}(m+1)^{-s/2}
=
E_m .
\label{eq:lambda0-Em}
\end{equation}

Now let $\phi_m$ be a normalized ground state of $H_m$. Since $\phi_m\in\mathcal H_{\le m}$,
$$
\langle \phi_m,H\phi_m\rangle
=
\langle \phi_m,H_m\phi_m\rangle
=
\lambda_{0,m}.
$$
Decompose the ground state of the truncated operator along the ground state of the infinite-dimensional operator
$$
\phi_m=a_m\phi+\eta_m,
\qquad
\eta_m\perp \phi .
$$
The spectral gap of $H$ implies
\begin{equation}
\lambda_{0,m}
=
\langle \phi_m,H\phi_m\rangle
\ge
\gamma\|\eta_m\|^2 .
\label{eq:gap-energy-state-error}
\end{equation}
Using \eqref{eq:lambda0-Em}, we obtain
\begin{equation}
\|(I-|\phi\rangle\langle\phi|)\phi_m\|
=
\|\eta_m\|
\le
\left(\frac{E_m}{\gamma}\right)^{1/2}.
\label{eq:projection-error-proof}
\end{equation}
After choosing the phase of $\phi_m$ so that $a_m\ge 0$, we have
$$
\|\phi_m-\phi\|^2
=
2(1-a_m)
\le
2\|\eta_m\|^2,
$$
and therefore
\begin{equation}
\|\phi_m-\phi\|
\le
\left(\frac{2E_m}{\gamma}\right)^{1/2}.
\label{eq:vector-error-proof}
\end{equation}

It remains to prove the lower bound on the truncated gap. Let $v\in\mathcal H_{\le m}$ be normalized
and orthogonal to $\phi_m$. Since $v\perp \phi_m$,
$$
|\langle v,\phi\rangle|
=
|\langle v,\phi-\phi_m\rangle|
\le
\|\phi-\phi_m\|
\le
\left(\frac{2E_m}{\gamma}\right)^{1/2}.
$$
Using the spectral gap of $H$ again,
\begin{align}
\langle v,H_mv\rangle
&=
\langle v,Hv\rangle  \\
&\ge
\gamma\left(1-|\langle v,\phi\rangle|^2\right)  \\
&\ge
\gamma-2E_m .
\end{align}
By the min-max principle, this implies
\begin{equation}
\lambda_{1,m}\ge \gamma-2E_m .
\label{eq:lambda1m-bound}
\end{equation}
Together with $\lambda_{0,m}\le E_m$, this gives
\begin{equation}
\gamma_m
=
\lambda_{1,m}-\lambda_{0,m}
\ge
\gamma-3E_m .
\end{equation}
This proves the theorem.

\section{Details for the symmetric bosonic encoding}
\label{app:symmetric-bosonic-encoding}

\subsection{The symmetric encoding isometry}
\label{app:symmetric-encoding-isometry}

We prove \Cref{prop:symmetric-encoding-global-cut-off}. Let
$$
\mathcal K_N
=
\operatorname{span}
\{
\ket{\emptyset},\ket{1},\ldots,\ket{N}
\}.
$$
For $\alpha\in\mathbb N^N$ with $|\alpha|\leq m$, define the multiset
$$
M_\alpha
=
\{
\underbrace{1,\ldots,1}_{\alpha_1},
\underbrace{2,\ldots,2}_{\alpha_2},
\ldots,
\underbrace{N,\ldots,N}_{\alpha_N},
\underbrace{\emptyset,\ldots,\emptyset}_{m-|\alpha|}
\}.
$$
Let $\mathcal P_\alpha$ be the set of distinct ordered strings obtained by permuting this multiset. Its cardinality is
\begin{equation}
\label{eq:number-distinct-permutations}
|\mathcal P_\alpha|
=
\frac{m!}{(m-|\alpha|)!\prod_{j=1}^N \alpha_j!}.
\end{equation}
The encoded state is
\begin{equation}
\label{eq:encoded-alpha-state}
\ket{\alpha}_F
\mapsto
\frac{1}{\sqrt{|\mathcal P_\alpha|}}
\sum_{b\in \mathcal P_\alpha}
\ket{b_1,\ldots,b_m}_Q=:\ket{\alpha;m}_Q.
\end{equation}
This state is invariant under permutations of the $m$ registers, hence belongs to $\operatorname{Sym}^m(\mathcal K_N)$.

If $\alpha\neq \beta$, then the two multisets $M_\alpha$ and $M_\beta$ are different. Therefore $\mathcal P_\alpha$ and $\mathcal P_\beta$ are disjoint sets of computational-basis strings. It follows that
$$
\langle \alpha|\beta\rangle_Q=0.
$$
For $\alpha=\beta$, \Cref{eq:encoded-alpha-state} is normalized by construction. Hence the family $\{\ket{\alpha;m}_Q\}_{|\alpha|\leq m}$ is orthonormal, and the map $V_m\ket{\alpha}_F=\ket{\alpha;m}_Q$ is an isometry for $|\alpha|\leq m$.

It remains to identify the image. A basis vector of $\operatorname{Sym}^m(\mathcal K_N)$ is specified by occupation numbers of the $N+1$ labels
$$
(\alpha_1,\ldots,\alpha_N,\alpha_\emptyset),
\qquad
\alpha_\emptyset+\sum_{j=1}^N\alpha_j=m.
$$
Equivalently, it is specified by an $N$-mode occupation vector $\alpha=(\alpha_1,\ldots,\alpha_N)$ satisfying $|\alpha|\leq m$, with $\alpha_\emptyset=m-|\alpha|$. Thus the symmetric basis vectors are in one-to-one correspondence with the global Hermite modes retained in $\mathcal H_{\leq m}$. Therefore $V_m$ maps $\mathcal H_{\leq m}$ onto $\operatorname{Sym}^m(\mathcal K_N)$.

The dimension is the number of weak compositions of $m$ into $N+1$ parts:
\begin{equation}
\label{eq:symmetric-dimension-proof}
\dim \operatorname{Sym}^m(\mathcal K_N)
=
\binom{(N+1)+m-1}{m}
=
\binom{N+m}{m}.
\end{equation}
This is also the number of multi-indices $\alpha\in\mathbb N^N$ with $|\alpha|\leq m$.

\subsection{Encoded creation and annihilation operators}
\label{app:encoded-bosonic-primitives}

We prove \Cref{prop:encoded-bosonic-primitives}. Recall that
$$
\mathcal K_N
=
\operatorname{span}\{\ket{\emptyset},\ket{1},\ldots,\ket{N}\}.
$$
For register $r\in\{1,\ldots,m\}$, define the single-register transitions
$$
b_{j,r}^\dagger
=
\ket{j}\!\bra{\emptyset}_r,
\qquad
b_{j,r}
=
\ket{\emptyset}\!\bra{j}_r.
$$
Let
\begin{equation}
\label{eq:unscaled-register-ladder}
B_j^\dagger
=
\sum_{r=1}^m b_{j,r}^\dagger,
\qquad
B_j
=
\sum_{r=1}^m b_{j,r}.
\end{equation}
These operators act symmetrically on the $m$ registers and preserve the symmetric subspace. Let
\begin{equation}
\label{eq:empty-number-operator}
N_\emptyset
=
\sum_{r=1}^m
\ket{\emptyset}\!\bra{\emptyset}_r.
\end{equation}
On the encoded state $\ket{\alpha}_Q$, one has
$$
N_\emptyset\ket{\alpha}_Q
=
(m-|\alpha|)\ket{\alpha}_Q.
$$
The unnormalized transition operators have the actions
\begin{equation}
\label{eq:B-dagger-action}
B_j^\dagger\ket{\alpha}_Q
=
\sqrt{(\alpha_j+1)(m-|\alpha|)}
\ket{\alpha+e_j}_Q
\end{equation}
for $|\alpha|<m$, and zero for $|\alpha|=m$. Similarly,
\begin{equation}
\label{eq:B-action}
B_j\ket{\alpha}_Q
=
\sqrt{\alpha_j(m-|\alpha|+1)}
\ket{\alpha-e_j}_Q.
\end{equation}
These are the standard matrix elements of occupation operators on the symmetric power. The factor $\sqrt{\alpha_j+1}$ or $\sqrt{\alpha_j}$ comes from the occupation of mode $j$, while the factor involving $m-|\alpha|$ comes from the occupation of the empty label.
To remove the empty-label normalization, define
\begin{equation}
\label{eq:encoded-ladder-from-B}
A_j^\dagger
=
B_j^\dagger N_\emptyset^{-1/2},
\qquad
A_j
=
B_j (N_\emptyset+I)^{-1/2},
\end{equation}
where $N_\emptyset^{-1/2}$ is defined to be zero on the kernel of $N_\emptyset$. Acting on $\ket{\alpha}_Q$, \Cref{eq:B-dagger-action,eq:B-action} give
$$
A_j^\dagger\ket{\alpha}_Q
=
\begin{cases}
\sqrt{\alpha_j+1}\ket{\alpha+e_j}_Q, & |\alpha|<m,\\
0, & |\alpha|=m,
\end{cases}
$$
and
$$
A_j\ket{\alpha}_Q
=
\sqrt{\alpha_j}\ket{\alpha-e_j}_Q.
$$
Therefore $A_j$ and $A_j^\dagger$ are exactly the encoded versions of $\Pi_m\hat a_j\Pi_m$ and $\Pi_m\hat a_j^\dagger\Pi_m$.
The number operator is
\begin{equation}
\label{eq:encoded-number-register}
N_m
=
mI-N_\emptyset.
\end{equation}
Hence
$$
N_m\ket{\alpha}_Q=|\alpha|\ket{\alpha}_Q,
$$
which agrees with the total Hermite number operator on $\mathcal H_{\leq m}$. The formulas for $Q_j$ and $P_j$ follow from the identities
$$
\hat q_j=\frac{\hat a_j+\hat a_j^\dagger}{\sqrt 2},
\qquad
\hat p_j=\frac{\hat a_j-\hat a_j^\dagger}{i\sqrt 2}.
$$
We finally comment on block encodings. 
The operators $B_j$ and $B_j^\dagger$ are sums of $m$ identical single-register transitions, selected by a register index $r$. 
They can therefore be block encoded by a standard linear-combination-of-unitaries construction using a uniform superposition over $r\in\{1,\ldots,m\}$. 
The diagonal corrections $N_\emptyset^{-1/2}$ and $(N_\emptyset+I)^{-1/2}$ depend only on the number of empty slots and can be implemented by reversible counting and arithmetic. 
This gives block encodings of $A_j$ and $A_j^\dagger$ with normalization $O(\sqrt m)$. The operators $Q_j$ and $P_j$ are fixed linear combinations of these, and $N_m$ is diagonal with spectrum contained in $\{0,\ldots,m\}$, giving normalization $O(m)$.

A linear form can be constructed using Linear Combination of Unitary (LCU)
$$
\mathcal L(v,w)
=
\sum_j v_jQ_j+\sum_j w_jP_j,
$$
We use state preparation for the coefficient vectors $v$ and $w$ together with the same slot-selection construction. The normalization is $O(\sqrt m(\|v\|+\|w\|))$.

\subsection{Boundary-safe polynomial evaluation and polynomial block encodings}
\label{app:block-encoding-polynomial-operators}

We give the proof of \Cref{thm:structured-polynomial-block-encoding}. 
Throughout this subsection, the degree $d$ is the constant degree appearing in \Cref{def:structured-operator-polynomial}.
There are two ingredients: boundary-safe evaluation and standard composition rules for block encodings.

\subsubsection{Boundary-safe evaluation}
\label{app:boundary-safe-polynomial-evaluation}
There is one technical point. If $P$ has degree $d$, then applying $P(\hat q,\hat p)$ to a state in $\mathcal H_{\leq m}$ can pass through Hermite levels as high as $m+d$ before returning to $\mathcal H_{\leq m}$. Therefore, the correct Galerkin matrix is not obtained by simply replacing $\hat q_j,\hat p_j$ with their $m$-truncated versions inside the polynomial. One must evaluate the polynomial in a slightly enlarged cut-off. Explicitly, if $M=m+d$, then
\begin{equation}
\label{eq:boundary-safe-polynomial-evaluation}
\Pi_m P(\hat q,\hat p)\Pi_m
=
\Pi_m P(\Pi_M\hat q\Pi_M,\Pi_M\hat p\Pi_M)\Pi_m .
\end{equation}
We refer to this as boundary-safe polynomial evaluation.

Let $P(\hat q,\hat p)$ be an ordered polynomial of degree at most $d$. Set $M=m+d$. We claim that
\begin{equation}
\label{eq:boundary-safe-proof-identity}
\Pi_m P(\hat q,\hat p)\Pi_m
=
\Pi_m P(\Pi_M\hat q\Pi_M,\Pi_M\hat p\Pi_M)\Pi_m .
\end{equation}
It is enough to prove the claim for an ordered monomial
$$
X_1X_2\cdots X_k,
\qquad
k\leq d,
$$
where each $X_\ell$ is one of the operators $\hat q_j$ or $\hat p_j$. 
Each $X_\ell$ changes the total Hermite degree by at most one. Therefore, starting from a vector in $\mathcal H_{\leq m}$, every intermediate vector appearing after applying at most $k$ factors lies in $\mathcal H_{\leq m+k}\subseteq \mathcal H_{\leq M}$.

It follows that inserting $\Pi_M$ between the factors has no effect on the component that starts and ends in $\mathcal H_{\leq m}$:
$$
\Pi_m X_1\cdots X_k\Pi_m
=
\Pi_m
(\Pi_M X_1\Pi_M)
\cdots
(\Pi_M X_k\Pi_M)
\Pi_m .
$$
Linearity gives \Cref{eq:boundary-safe-proof-identity} for any ordered polynomial $P$.

\subsubsection{Block encoding of structured linear forms}
\label{app:block-encoding-linear-forms}

Let
$$
\mathcal L(a,v,w)
=
aI+\sum_{j=1}^N v_jQ_j+\sum_{j=1}^N w_jP_j
$$
on the cut-off-$M$ encoded space. By \Cref{prop:encoded-bosonic-primitives}, each $Q_j$ and $P_j$ admits a block encoding with normalization $O(\sqrt M)$. If the normalized states
$$
\ket{v}
=
\frac{1}{\|v\|}
\sum_{j=1}^N v_j\ket j,
\qquad
\ket{w}
=
\frac{1}{\|w\|}
\sum_{j=1}^N w_j\ket j
$$
are efficiently preparable, a standard linear-combination-of-unitaries~\cite{childs2012hamiltonian-554} construction gives block encodings of
$$
\sum_j v_jQ_j
\qquad\text{and}\qquad
\sum_j w_jP_j
$$
with normalizations $O(\sqrt M\|v\|)$ and $O(\sqrt M\|w\|)$, respectively. Including the scalar term $aI$ gives a block encoding of $\mathcal L(a,v,w)$ with normalization
\begin{equation}
\label{eq:linear-form-normalization}
\alpha(a,v,w)
=
O\left(
|a|+\sqrt M(\|v\|+\|w\|)
\right).
\end{equation}

\subsubsection{Composition into ordered polynomials}
\label{app:block-encoding-polynomial-composition}

Consider one ordered monomial
$$
P_r(Q,P)
=
c_r
\prod_{\ell=1}^{k_r}
\mathcal L(a_{r,\ell},v_{r,\ell},w_{r,\ell}).
$$
For each factor, let $\alpha_{r,\ell}$ denote the normalization in \Cref{eq:linear-form-normalization}. Composing block encodings gives a block encoding of the ordered product with normalization
\begin{equation}
\label{eq:monomial-normalization}
\alpha_r
=
|c_r|
\prod_{\ell=1}^{k_r}\alpha_{r,\ell}.
\end{equation}
The ordering is preserved by applying the factor block encodings in the same order as in the monomial.

Finally, summing over $r=1,\ldots,R$ by a linear-combination-of-unitaries construction gives a block encoding of
$$
P(Q,P)
=
\sum_{r=1}^R P_r(Q,P)
$$
with normalization
\begin{equation}
\label{eq:polynomial-normalization-appendix}
\alpha_P
=
\sum_{r=1}^R
|c_r|
\prod_{\ell=1}^{k_r}
O\left(
|a_{r,\ell}|
+
\sqrt M
(\|v_{r,\ell}\|+\|w_{r,\ell}\|)
\right).
\end{equation}
Since $M=m+d$, this is polynomial under the conditions of \Cref{thm:structured-polynomial-block-encoding}.

Combining this construction with boundary-safe evaluation \Cref{eq:boundary-safe-proof-identity} yields a block encoding of
$$
\Pi_m P(\hat q,\hat p)\Pi_m
$$
on the cut-off-$m$ subspace. This proves \Cref{thm:structured-polynomial-block-encoding}.

\subsection{Readout from the symmetric encoding}
\label{app:readout-observables-marginals}

We justify \Cref{prop:accessible-readout}. Let
$$
\ket{\psi_m}_Q
=
\sum_{|\alpha|\leq m}c_\alpha\ket{\alpha}_Q.
$$

\subsubsection{Block-encoded observables}

If $O_m$ has an $(\alpha_O,a,\epsilon)$ block encoding $U_O$, then the expectation
$$
\langle O_m\rangle_{\psi_m}
=
\bra{\psi_m}O_m\ket{\psi_m}
$$
can be estimated by a Hadamard test or amplitude estimation applied to $U_O$ and the state-preparation circuit for $\ket{\psi_m}_Q$. The sample complexity scales polynomially in $\alpha_O$ and the desired inverse precision. When $O_m=P(Q,P)$ is a structured polynomial observable, \Cref{thm:structured-polynomial-block-encoding} supplies the required block encoding.

\subsubsection{Occupation diagnostics}

The total occupation operator is diagonal in the encoded basis:
$$
N_m\ket{\alpha}_Q=|\alpha|\ket{\alpha}_Q.
$$
Equivalently, in the register encoding,
$$
N_m=mI-N_\emptyset,
$$
where $N_\emptyset$ counts the number of empty labels. Thus $N_m$ can be measured by counting the number of registers not equal to $\emptyset$. This gives direct access to the distribution
$$
\Pr[K=k]
=
\sum_{|\alpha|=k}|c_\alpha|^2,
\qquad
0\leq k\leq m.
$$
Moments such as $\langle N_m^r\rangle$ are then obtained classically from samples of $K$.

Large mass near $K=m$ indicates that the state is interacting strongly with the boundary of the cut-off. This does not by itself prove that the approximation is bad, but it is a useful diagnostic that the cut-off should be increased.

\subsubsection{Reduced Hermite matrices and marginal sampling}
\label{app:reduced-hermite-marginal-sampling}
Sampling $x\sim\rho_{\infty}$ cannot be efficient as $x\in\mathbb R^N$, and simply writing a single sample has exponential complexity when $N=2^n$. 
A meaningful target is constant-dimensional marginal sampling. 
We describe the one-coordinate case; the extension to any constant-size set of coordinates is analogous.
Consider an encoded state, and its corresponding Hermite expansion
$$
\ket{\psi_m}_Q
=
\sum_{|\alpha|\leq m} c_\alpha \ket{\alpha}_Q,
\qquad 
\psi_m(x)
=
\sum_{|\alpha|\leq m} c_\alpha h_\alpha(x)
$$
Fix a coordinate of interest $j\in\{1,\ldots,N\}$ and split the multi-index as
$$
\alpha=(\alpha_{\neq j},\alpha_j).
$$
Define the reduced Hermite matrix of coordinate $j$ by
\begin{equation}
\label{eq:reduced-hermite-matrix-single-coordinate}
\Gamma^{(j)}_{k\ell}
=
\sum_{\alpha_{\neq j}}
c_{\alpha_{\neq j},k}\overline{c_{\alpha_{\neq j},\ell}},
\qquad
0\leq k,\ell\leq m.
\end{equation}
Here coefficients outside the global cut-off are understood to be zero. Equivalently, the sum only includes indices for which both $|\alpha_{\neq j}|+k\leq m$ and $|\alpha_{\neq j}|+\ell\leq m$.

By orthonormality of the Hermite basis in the unobserved variables, the one-coordinate marginal density is
\begin{equation}
\label{eq:one-coordinate-marginal-hermite}
\rho_j(x_j)
=
\int_{\mathbb R^{N-1}}
|\psi_m(x)|^2\,dx_{\neq j}
=
\sum_{k,\ell=0}^m
\Gamma^{(j)}_{k\ell}h_k(x_j)h_\ell(x_j).
\end{equation}
Thus, once $\Gamma^{(j)}$ is known, the marginal is a one-dimensional density with $m+1$ Hermite modes. It can be evaluated classically in time polynomial in $m$. 
Standard one-dimensional quadrature and inverse-CDF sampling then give classical marginal samples with cost polynomial in $m$ and the desired sampling accuracy.

It remains to explain how the entries of $\Gamma^{(j)}$ are accessed from the encoded state. Define the mode-$j$ occupation counter
\begin{equation}
\label{eq:mode-j-counter}
N_j
=
\sum_{t=1}^m
\ket{j}\!\bra{j}^{(t)}.
\end{equation}
On the symmetric logical space, this agrees with $A_j^\dagger A_j$, and
$$
N_j\ket{\alpha}_Q=\alpha_j\ket{\alpha}_Q.
$$
Let $P^{(j)}_0$ be the projector onto zero occupation of mode $j$. It can be implemented by reversibly computing $N_j$ and checking whether the result is zero. Algebraically, on the truncated space, it is the polynomial
\begin{equation}
\label{eq:zero-occupation-projector-mode-j}
P^{(j)}_0
=
\prod_{r=1}^m
\left(
I-\frac{N_j}{r}
\right).
\end{equation}
Indeed, on a basis state with $j$-occupation $a_j\in\{0,\ldots,m\}$, this polynomial evaluates to $1$ if $a_j=0$ and to $0$ otherwise.

For $0\leq k,\ell\leq m$, define
\begin{equation}
\label{eq:mode-j-matrix-unit}
M^{(j)}_{k\ell}
=
\frac{(A_j^\dagger)^\ell}{\sqrt{\ell!}}\,
P^{(j)}_0\,
\frac{A_j^k}{\sqrt{k!}}.
\end{equation}
This operator removes $k$ particles in mode $j$, checks that the mode is then empty, and creates $\ell$ particles in mode $j$. More precisely,
\begin{equation}
\label{eq:mode-j-matrix-unit-action}
M^{(j)}_{k\ell}
\ket{\alpha_{\neq j},k}_Q
=
\ket{\alpha_{\neq j},\ell}_Q,
\end{equation}
whenever both states lie inside the global cut-off, and $M^{(j)}_{k\ell}$ annihilates basis states whose $j$-occupation is not $k$. Therefore
\begin{equation}
\label{eq:mode-j-matrix-unit-expansion}
M^{(j)}_{k\ell}
=
\sum_{\alpha_{\neq j}}
\ket{\alpha_{\neq j},\ell}_Q
\bra{\alpha_{\neq j},k}_Q,
\end{equation}
where the sum is restricted to indices for which both states belong to $\mathcal H_{\leq m}$.

Taking the expectation in $\ket{\psi_m}_Q$ gives
\begin{equation}
\label{eq:reduced-hermite-entry-expectation}
\Gamma^{(j)}_{k\ell}
=
\bra{\psi_m}M^{(j)}_{k\ell}\ket{\psi_m}_Q.
\end{equation}
For $k\neq \ell$, the operator $M^{(j)}_{k\ell}$ is not Hermitian. Since
$$
\left(M^{(j)}_{k\ell}\right)^\dagger
=
M^{(j)}_{\ell k},
$$
its real and imaginary parts are obtained from the Hermitian observables
\begin{equation}
\label{eq:matrix-unit-hermitian-parts}
\frac{1}{2}
\left(
M^{(j)}_{k\ell}+M^{(j)}_{\ell k}
\right),
\qquad
\frac{1}{2i}
\left(
M^{(j)}_{k\ell}-M^{(j)}_{\ell k}
\right).
\end{equation}

The operators $M^{(j)}_{k\ell}$ are built from the encoded primitives $A_j$, $A_j^\dagger$, and $N_j$. Since $k,\ell\leq m$, and $m$ is the global truncation parameter, each $M^{(j)}_{k\ell}$ has a block encoding with normalization and gate complexity polynomial in $m$ and $\log N$. Hence each entry of $\Gamma^{(j)}$ can be estimated with polynomial query and gate complexity.

More explicitly, suppose each $M^{(j)}_{k\ell}$ is block encoded with normalization $\eta_M=\operatorname{poly}(m)$ and gate cost $C_M=\operatorname{poly}(m,\log N)$. Then estimating all entries of $\Gamma^{(j)}$ to entrywise accuracy $\delta$ and failure probability $p_{\mathrm{fail}}$ has sampling query complexity
\begin{equation}
\label{eq:marginal-entry-sampling-complexity}
O\left(
(m+1)^2
\frac{\eta_M^2}{\delta^2}
\log\frac{(m+1)^2}{p_{\mathrm{fail}}}
\right),
\end{equation}
or, with coherent amplitude estimation,
\begin{equation}
\label{eq:marginal-entry-ae-complexity}
O\left(
(m+1)^2
\frac{\eta_M}{\delta}
\log\frac{(m+1)^2}{p_{\mathrm{fail}}}
\right).
\end{equation}
The corresponding gate complexity is obtained by multiplying by $C_M$.

Finally, entrywise accuracy controls the reconstructed marginal. Suppose
$$
|\widehat\Gamma^{(j)}_{k\ell}-\Gamma^{(j)}_{k\ell}|\leq \delta
$$
for all $0\leq k,\ell\leq m$, and define
$$
\widehat\rho_j(x_j)
=
\sum_{k,\ell=0}^m
\widehat\Gamma^{(j)}_{k\ell}h_k(x_j)h_\ell(x_j).
$$
Then
\begin{align}
\label{eq:marginal-l1-error}
\|\widehat\rho_j-\rho_j\|_{L^1(\mathbb R)}
&\leq
\sum_{k,\ell=0}^m
|\widehat\Gamma^{(j)}_{k\ell}-\Gamma^{(j)}_{k\ell}|
\|h_kh_\ell\|_{L^1(\mathbb R)}
\nonumber\\
&\leq
\sum_{k,\ell=0}^m
\delta\|h_k\|_{L^2(\mathbb R)}\|h_\ell\|_{L^2(\mathbb R)}
\nonumber\\
&=
(m+1)^2\delta .
\end{align}
Thus choosing
$$
\delta=O\left(\frac{\epsilon_{\mathrm{marg}}}{(m+1)^2}\right)
$$
gives an $L^1$-accurate reconstruction of the marginal density. The final sampling step is then classical and one-dimensional.

It is useful to contrast this route with the quantum Hermite transform. In a local tensor-product truncation, one stores a coefficient vector in a register of the form
$$
\ket{\psi}
=
\sum_{\alpha_1,\ldots,\alpha_N}
c_\alpha
\ket{\alpha_1}\cdots\ket{\alpha_N}.
$$
An inverse quantum Hermite transform as in~\cite{jain2026efficient-da2} applied coordinatewise would map this coefficient representation to a discretized position representation,
\begin{equation}
\label{eq:inverse-qht-local-truncation}
\ket{\psi}
\longmapsto
\sum_x \psi(x)\ket{x},
\end{equation}
after which measurement would sample from the discretized density. 
This is the natural strategy in a tensor-product encoding.

The global symmetric encoding is different. It stores the state by occupations of indistinguishable excitations, not by assigning a Hermite register to each coordinate. There is therefore no immediate coordinatewise tensor-product structure on which to apply $(\mathrm{QHT}^{-1})^{\otimes N}$. Moreover, when $N=2^n$, a full position-basis sample has exponentially many coordinates. For the global truncation, the reduced-Hermite-matrix construction above is the appropriate readout primitive: it extracts low-dimensional marginals without expanding the whole state into a tensor-product position representation.

\section{Fokker--Planck and Inverse problems}
\subsection{Stationarity of Gibbs densities for Fokker--Planck equations}
\label{app:gibbs-stationarity-fokker-planck}

We justify the stationarity claims used in \Cref{subsec:general-fokker-planck}. Consider
\begin{equation}
\label{eq:appendix-general-fp}
\partial_t\rho
=
\nabla\cdot(\rho F)+\Delta\rho .
\end{equation}
Let $V:\mathbb R^N\to\mathbb R$ be such that
$$
Z=\int_{\mathbb R^N}e^{-V(x)}dx<+\infty,
$$
and define
$$
\rho_\infty(x)=Z^{-1}e^{-V(x)}.
$$
The density $\rho_\infty$ is stationary for \Cref{eq:appendix-general-fp} if and only if
\begin{equation}
\label{eq:appendix-stationarity-condition}
\nabla\cdot(\rho_\infty F)+\Delta\rho_\infty=0.
\end{equation}
Assume that the drift admits the decomposition
\begin{equation}
\label{eq:appendix-drift-decomposition}
F=\nabla V+G.
\end{equation}
Since $\rho_\infty=Z^{-1}e^{-V}$, the constant $Z^{-1}$ plays no role in the stationarity equation. We compute
\begin{align}
\nabla\cdot(e^{-V}F)+\Delta e^{-V}
&=
\nabla\cdot(e^{-V}\nabla V)
+
\nabla\cdot(e^{-V}G)
+
\Delta e^{-V}.
\end{align}
Using
$$
\nabla e^{-V}=-e^{-V}\nabla V,
$$
we have
$$
\Delta e^{-V}
=
\nabla\cdot(-e^{-V}\nabla V).
$$
Therefore
$$
\nabla\cdot(e^{-V}\nabla V)+\Delta e^{-V}=0,
$$
and hence
\begin{equation}
\label{eq:appendix-stationarity-reduction}
\nabla\cdot(e^{-V}F)+\Delta e^{-V}
=
\nabla\cdot(e^{-V}G).
\end{equation}
Thus $\rho_\infty=Z^{-1}e^{-V}$ is stationary if and only if
\begin{equation}
\label{eq:appendix-weighted-div-free}
\nabla\cdot(e^{-V}G)=0.
\end{equation}
Equivalently,
\begin{equation}
\label{eq:appendix-weighted-div-free-expanded}
\nabla\cdot G=G\cdot\nabla V.
\end{equation}
This is the weighted divergence-free condition. It is the ordinary divergence-free condition only when $V$ is constant.

\subsection{Ground-state transform with weighted divergence-free drift}
\label{app:ground-state-transform-fokker-planck}

We derive \Cref{eq:ground-state-transformed-equation}. Consider the Fokker--Planck equation
\begin{equation}
\label{eq:app-general-nonreversible-fp}
\partial_t\rho
=
\nabla\cdot(\rho(\nabla V+G))+\Delta\rho,
\qquad
\nabla\cdot(e^{-V}G)=0.
\end{equation}
Let
$$
\phi=e^{-V/2}
$$
up to an irrelevant normalization constant, and write
$$
\rho=\phi\psi.
$$
Since $\phi$ is time independent,
$$
\partial_t\rho=\phi\,\partial_t\psi.
$$
We split the transformed generator into the reversible contribution generated by $\nabla V$ and the residual contribution generated by $G$.

First consider the reversible part. We compute
$$
\nabla(\phi\psi)
=
\phi\nabla\psi+\psi\nabla\phi.
$$
Since
$$
\nabla\phi
=
-\frac12\phi\nabla V,
$$
we have
$$
\nabla(\phi\psi)
=
\phi\nabla\psi-\frac12\phi\psi\nabla V.
$$
Therefore
\begin{align}
\nabla\cdot(\rho\nabla V)+\Delta\rho
&=
\nabla\cdot(\phi\psi\nabla V)+\Delta(\phi\psi) \nonumber\\
&=
\nabla\cdot
\left(
\phi\nabla\psi+\frac12\phi\psi\nabla V
\right).
\end{align}
Expanding the divergence gives
\begin{align}
\nabla\cdot(\rho\nabla V)+\Delta\rho
&=
\nabla\phi\cdot\nabla\psi
+
\phi\Delta\psi
+
\frac12\nabla\phi\cdot(\psi\nabla V)
+
\frac12\phi\nabla\psi\cdot\nabla V
+
\frac12\phi\psi\Delta V .
\end{align}
The terms involving $\nabla\psi\cdot\nabla V$ cancel because $\nabla\phi=-\frac12\phi\nabla V$. Hence
\begin{align}
\nabla\cdot(\rho\nabla V)+\Delta\rho
&=
\phi\Delta\psi
-
\frac14\phi\psi|\nabla V|^2
+
\frac12\phi\psi\Delta V .
\end{align}
Dividing by $\phi$ gives
\begin{equation}
\label{eq:app-ground-state-transform-result}
\phi^{-1}\left[\nabla\cdot(\rho\nabla V)+\Delta\rho\right]
=
\Delta\psi
-
\frac14|\nabla V|^2\psi
+
\frac12\Delta V\,\psi
=
-H_V\psi,
\end{equation}
where
\begin{equation}
\label{eq:app-schrodinger-H}
H_V
=
-\Delta
+
\frac14|\nabla V|^2
-
\frac12\Delta V .
\end{equation}

Now consider the residual drift. We have
\begin{align}
\phi^{-1}\nabla\cdot(\rho G)
&=
\phi^{-1}\nabla\cdot(\phi\psi G) \nonumber\\
&=
G\cdot\nabla\psi
+
\psi\,\phi^{-1}G\cdot\nabla\phi
+
\psi\,\nabla\cdot G .
\end{align}
Using $\phi^{-1}G\cdot\nabla\phi=-\frac12G\cdot\nabla V$ and the weighted divergence-free condition
$$
\nabla\cdot(e^{-V}G)=0
\quad\Longleftrightarrow\quad
\nabla\cdot G=G\cdot\nabla V,
$$
we obtain
\begin{equation}
\label{eq:app-skew-transport-G}
\phi^{-1}\nabla\cdot(\rho G)
=
K_G\psi,
\qquad
K_G\psi
=
G\cdot\nabla\psi+
\frac12(\nabla\cdot G)\psi .
\end{equation}
Combining the two contributions proves
\begin{equation}
\partial_t\psi=-H_V\psi+K_G\psi.
\end{equation}

The operator $K_G$ is skew-adjoint in $L^2(dx)$. Indeed, for smooth compactly supported functions $f,g$,
\begin{align}
\langle f,K_G g\rangle
&=
\int f\,G\cdot\nabla g\,dx
+
\frac12\int f g\,\nabla\cdot G\,dx \\ 
&=
-\int g\,G\cdot\nabla f\,dx
-\frac12\int f g\,\nabla\cdot G\,dx \\ 
&=
-\langle K_G f,g\rangle,
\end{align}
where we used integration by parts. Thus, writing $H_G:=-iK_G$, the operator $H_G$ is formally self-adjoint and $K_G=+iH_G$.

We also verify that the stationary state is annihilated by both pieces. For the reversible part, define
$$
D_j
=
\partial_{x_j}
+
\frac12\partial_{x_j}V.
$$
The formal adjoint in $L^2(\mathbb R^N,dx)$ is
$$
D_j^\dagger
=
-\partial_{x_j}
+
\frac12\partial_{x_j}V.
$$
Then
\begin{align}
D_j^\dagger D_j f
&=
\left(
-\partial_{x_j}
+
\frac12\partial_{x_j}V
\right)
\left(
\partial_{x_j}f
+
\frac12(\partial_{x_j}V)f
\right) \nonumber\\
&=
-\partial_{x_j}^2 f
-
\frac12(\partial_{x_j}^2V)f
+
\frac14(\partial_{x_j}V)^2f .
\end{align}
Summing over $j$ gives
$$
\sum_{j=1}^N D_j^\dagger D_j
=
-\Delta
+
\frac14|\nabla V|^2
-
\frac12\Delta V
=
H_V.
$$
Thus $H_V\geq 0$ on its natural quadratic-form domain. Since
$$
D_j\phi
=
\partial_{x_j}(e^{-V/2})
+
\frac12(\partial_{x_j}V)e^{-V/2}
=
0,
$$
we have
$$
H_V\phi=\sum_{j=1}^N D_j^\dagger D_j\phi=0.
$$
For the residual part,
$$
K_G\phi
=
G\cdot\nabla\phi+
\frac12(\nabla\cdot G)\phi
=
-\frac12(G\cdot\nabla V)\phi+
\frac12(\nabla\cdot G)\phi
=
0,
$$
again using $\nabla\cdot G=G\cdot\nabla V$. Hence $\phi=\sqrt{\rho_\infty}$ is a zero mode of the full transformed generator $-H_V+K_G$. Setting $G=0$ gives the reversible ground-state transform used in the posterior Hamiltonian construction.

\subsection{Quadratic confinement and oscillator form}
\label{app:quadratic-confinement-oscillator-form}

In this subsection we specialize to the reversible case and write $H:=H_V$. 
We derive \Cref{eq:quadratic-potential-number-operator,eq:H-number-plus-potential}. If
$$
V(x)=\|x\|^2,
$$
then
$$
\nabla V=2x,
\qquad
|\nabla V|^2=4\|x\|^2,
\qquad
\Delta V=2N.
$$
Substituting in \Cref{eq:app-schrodinger-H} gives
$$
H
=
-\Delta+\|x\|^2-N.
$$
Using the oscillator identity
$$
2\hat n=-\Delta+\|\hat q\|^2-N,
$$
we obtain
$$
H=2\hat n.
$$

Now let
$$
V(x)=\|x\|^2+F_0(x).
$$
Then
$$
\nabla V=2x+\nabla F_0,
\qquad
\Delta V=2N+\Delta F_0.
$$
Therefore
\begin{align}
H
&=
-\Delta
+
\frac14|2x+\nabla F_0(x)|^2
-
\frac12(2N+\Delta F_0(x)) \nonumber\\
&=
-\Delta+\|x\|^2-N
+
x\cdot\nabla F_0(x)
+
\frac14|\nabla F_0(x)|^2
-
\frac12\Delta F_0(x).
\end{align}
The first three terms are $2\hat n$. Hence
$$
H=2\hat n+W_{F_0}(\hat q),
$$
where
$$
W_{F_0}(x)
=
x\cdot\nabla F_0(x)
+
\frac14|\nabla F_0(x)|^2
-
\frac12\Delta F_0(x).
$$
If $F_0$ is a polynomial, then $W_{F_0}$ is also a polynomial. In particular, the transformed operator is of the form covered by the polynomial block-encoding results of \Cref{subsec:block-encoding-polynomial-operators}.

\subsection{Bayesian inverse problems in whitened coordinates}
\label{app:bayesian-inverse-problems-whitening}

We derive \Cref{eq:inverse-problem-potential}. The observation model is
$$
y=g(u)+\eta,
\qquad
\eta\sim \mathcal N(0,\Gamma),
$$
and the prior is
$$
u\sim \mathcal N(\mu,\Sigma).
$$
The posterior density with respect to Lebesgue measure on $u$ is proportional to
\begin{equation}
\label{eq:app-posterior-u}
\exp\left(
-\frac12\|u-\mu\|_\Sigma^2
-\frac12\|y-g(u)\|_\Gamma^2
\right),
\end{equation}
where
$$
\|u-\mu\|_\Sigma^2=(u-\mu)^T\Sigma^{-1}(u-\mu).
$$

Introduce the whitened coordinate
$$
u=\mu+\sqrt 2\,\Sigma^{1/2}x.
$$
Then
$$
\frac12\|u-\mu\|_\Sigma^2
=
\frac12
\left\|
\sqrt 2\,x
\right\|^2
=
\|x\|^2.
$$
The likelihood term becomes
$$
\frac12
\left\|
y-g(\mu+\sqrt 2\,\Sigma^{1/2}x)
\right\|_\Gamma^2.
$$
The Jacobian of the affine change of variables is constant and is absorbed into the normalizing constant. Therefore the posterior density in $x$ is
$$
\pi(x|y)
=
Z_y^{-1}e^{-V_y(x)},
$$
with
$$
V_y(x)
=
\|x\|^2
+
\frac12
\left\|
y-g(\mu+\sqrt 2\,\Sigma^{1/2}x)
\right\|_\Gamma^2.
$$
This proves \Cref{eq:inverse-problem-potential}.

Finally, for any observable $f$ for which the integral is well defined,
$$
\mathbb E_{\pi(\cdot|y)}[f(x)]
=
\int f(x)\pi(x|y)\,dx.
$$
Since $\phi_y=\sqrt{\pi(\cdot|y)}$, this is
$$
\int f(x)|\phi_y(x)|^2dx
=
\langle \phi_y,f(\hat q)\phi_y\rangle.
$$
This proves \Cref{eq:posterior-expectation-ground-state}.

\subsection{Structured forward maps and posterior Hamiltonian block encodings}
\label{app:structured-forward-map-posterior-hamiltonian}

\Cref{subsec:asymptotics-bayes} is stated directly in terms of $W_y(\hat q)$ having an efficient structured operator-polynomial representation. Here we give a useful sufficient condition in terms of the whitened residual
$$
r_y(x)=\Gamma^{-1/2}\left(y-g(\mu+\sqrt2\,\Sigma^{1/2}x)\right),
$$
so that
\begin{equation}
\label{eq:app-Fy-residual}
F_y(x)
=
\frac12\|r_y(x)\|^2
=
\frac12\sum_{a=1}^M r_{y,a}(x)^2.
\end{equation}
The posterior Hamiltonian is
\begin{equation}
\label{eq:app-Hy-Wy}
H_y
=
2\hat n+W_y(\hat q),
\qquad
W_y(x)
=
x\cdot\nabla F_y(x)
+
\frac14|\nabla F_y(x)|^2
-
\frac12\Delta F_y(x).
\end{equation}

Assume that $M=\operatorname{poly}(n)$ and that each residual component has a constant-degree structured scalar representation
\begin{equation}
\label{eq:app-residual-structured}
r_{y,a}(x)
=
\sum_{\rho=1}^{R_a}
 c_{a,\rho}
\prod_{s=1}^{k_{a,\rho}}
\ell_{a,\rho,s}(x),
\qquad
\ell_{a,\rho,s}(x)=a_{a,\rho,s}+v_{a,\rho,s}\cdot x,
\end{equation}
with $k_{a,\rho}\leq p=O(1)$, $\sum_a R_a=\operatorname{poly}(n)$, polynomially bounded scalar coefficients and vector norms, efficiently preparable normalized coefficient states, and classically computable inner products $v\cdot v'$ between all coefficient vectors appearing in the representation.

We show that these assumptions imply that $W_y(\hat q)$ has an efficient structured operator-polynomial representation with only position operators. It is enough to check closure under the operations appearing in \Cref{eq:app-Hy-Wy}. Let
\begin{equation}
\label{eq:app-structured-monomial}
T(x)
=
 c\prod_{s=1}^k \ell_s(x),
\qquad
\ell_s(x)=a_s+v_s\cdot x,
\qquad
k=O(1).
\end{equation}
Then
\begin{equation}
\label{eq:app-gradient-structured-monomial}
\nabla T(x)
=
c
\sum_{t=1}^k
v_t
\prod_{s\neq t}
\ell_s(x).
\end{equation}
Consequently,
\begin{equation}
\label{eq:app-x-dot-gradient-structured}
x\cdot\nabla T(x)
=
c
\sum_{t=1}^k
(v_t\cdot x)
\prod_{s\neq t}
\ell_s(x),
\end{equation}
which is again a sum of products of affine linear forms.

The Laplacian is also closed in this representation. Since each $\ell_s$ is affine,
\begin{equation}
\label{eq:app-laplacian-structured-monomial}
\Delta T(x)
=
c
\sum_{\substack{t,u=1\\ t\neq u}}^k
(v_t\cdot v_u)
\prod_{s\neq t,u}
\ell_s(x).
\end{equation}
Thus $\Delta T$ is structured provided the inner products $v_t\cdot v_u$ are classically computable.

Now suppose each residual component has the form
\begin{equation}
\label{eq:app-residual-structured-one}
r_{y,a}(x)
=
\sum_{\rho=1}^{R_a}
c_{a,\rho}
\prod_{s=1}^{k_{a,\rho}}
\ell_{a,\rho,s}(x),
\qquad
k_{a,\rho}\leq p.
\end{equation}
Then $F_y=\frac12\sum_a r_{y,a}^2$ is a structured polynomial of degree at most $2p$ and size polynomial in $M$ and the $R_a$'s. Applying \Cref{eq:app-x-dot-gradient-structured,eq:app-laplacian-structured-monomial} term by term shows that
$$
x\cdot\nabla F_y
\qquad\text{and}\qquad
\Delta F_y
$$
also have polynomial-size structured representations.

It remains to check $|\nabla F_y|^2$. By the calculation above, $\nabla F_y$ can be written as
\begin{equation}
\label{eq:app-gradient-Fy-structured}
\nabla F_y(x)
=
\sum_{\rho}
b_\rho(x)v_\rho,
\end{equation}
where each $b_\rho(x)$ is a structured polynomial of degree at most $2p-1$, and each $v_\rho$ is one of the coefficient vectors appearing in the affine forms. Therefore
\begin{equation}
\label{eq:app-gradient-square-structured}
|\nabla F_y(x)|^2
=
\sum_{\rho,\sigma}
(v_\rho\cdot v_\sigma)b_\rho(x)b_\sigma(x).
\end{equation}
This is again a polynomial-size structured polynomial, assuming the inner products $v_\rho\cdot v_\sigma$ are computable. Its degree is at most $4p-2$.

Combining the three pieces in \Cref{eq:app-Hy-Wy}, we conclude that $W_y$ is a structured polynomial of degree at most $4p-2$ and polynomial size. Hence, by \Cref{thm:structured-polynomial-block-encoding}, the truncated operator
$$
\Pi_m W_y(\hat q)\Pi_m
$$
admits a block encoding with polynomial normalization and gate complexity whenever $m=\operatorname{poly}(n)$. The term $2\hat n$ is diagonal in the global Hermite basis and has normalization $O(m)$. Therefore
$$
H_{y,m}
=
\Pi_m(2\hat n+W_y(\hat q))\Pi_m
$$
also admits a block encoding with polynomial normalization and gate complexity.

\subsection{Linear-Gaussian inverse problems}
\label{app:linear-gaussian-classical-reduction}

Consider a linear forward map, the whitened inverse problem has
\begin{equation}
\label{eq:app-linear-gaussian-model}
V(x)
=
\|x\|^2
+
\frac12\|b-Ax\|^2,
\end{equation}
with $A\in\mathbb R^{M\times N}$ and $b\in\mathbb R^M$. Expanding,
\begin{equation}
\label{eq:app-linear-gaussian-expanded}
V(x)
=
\frac12 x^TKx-h^Tx+\frac12\|b\|^2,
\qquad
K=2I+A^TA,
\qquad
h=A^Tb.
\end{equation}
Thus the posterior is Gaussian with precision $K$ and mean $K^{-1}h$. Since
$$
K\succeq 2I,
$$
the potential is uniformly convex.

The transformed Hamiltonian is
\begin{equation}
\label{eq:app-linear-gaussian-H}
H
=
-\Delta
+
\frac14\|Kx-h\|^2
-
\frac12\operatorname{tr}K.
\end{equation}
After shifting $x$ by the posterior mean $K^{-1}h$, this becomes
\begin{equation}
\label{eq:app-linear-gaussian-shifted-H}
H
=
-\Delta
+
\frac14 x^TK^2x
-
\frac12\operatorname{tr}K.
\end{equation}
Diagonalizing $K=U^T\operatorname{diag}(\kappa_1,\ldots,\kappa_N)U$, one obtains a sum of independent harmonic oscillators:
\begin{equation}
\label{eq:app-linear-gaussian-spectrum}
H
=
\sum_{j=1}^N
\left(
-\partial_{z_j}^2
+
\frac{\kappa_j^2}{4}z_j^2
-
\frac{\kappa_j}{2}
\right).
\end{equation}
The spectrum is
\begin{equation}
\label{eq:app-linear-gaussian-eigenvalues}
\left\{
\sum_{j=1}^N n_j\kappa_j:\ n_j\in\mathbb N
\right\}.
\end{equation}
Therefore the spectral gap is
\begin{equation}
\label{eq:app-linear-gaussian-gap}
\gamma
=
\lambda_{\min}(K)
\geq 2.
\end{equation}
The precise constant depends only on the whitening convention; the important point is that the gap is bounded below by a constant because of the Gaussian prior.

The same structure makes many readout tasks classically tractable. By the Woodbury identity,
\begin{equation}
\label{eq:app-woodbury-linear-gaussian}
K^{-1}
=
(2I+A^TA)^{-1}
=
\frac12 I
-
\frac14 A^T
\left(
I+\frac12AA^T
\right)^{-1}
A.
\end{equation}
Thus covariance queries reduce to linear algebra in the $M\times M$ matrix $AA^T$. For example, for any efficiently accessible vector $v\in\mathbb R^N$,
\begin{equation}
\label{eq:app-linear-gaussian-cov-query}
v^TK^{-1}v
=
\frac12\|v\|^2
-
\frac14(Av)^T
\left(
I+\frac12AA^T
\right)^{-1}
(Av).
\end{equation}
Similarly, a coordinate marginal variance is
\begin{equation}
\label{eq:app-linear-gaussian-coordinate-variance}
e_j^TK^{-1}e_j
=
\frac12
-
\frac14(Ae_j)^T
\left(
I+\frac12AA^T
\right)^{-1}
(Ae_j).
\end{equation}
Thus, when $M=\operatorname{poly}(n)$ and the relevant matrix-vector products are classically accessible, the linear-Gaussian case is efficiently reducible to classical linear algebra.

\subsection{Spectral gap for quadratic-square potentials}
\label{app:quadratic-square-spectral-gap}

We prove the spectral-gap statement used in \Cref{subsec:spectral-gap-hardness}. Let
\begin{equation}
\label{eq:app-quadratic-square-potential}
V(x)=\|x\|^2+f(x)^2,
\qquad
f(x)=x^TAx+b\cdot x+c,
\end{equation}
with $A=A^T$. Let
\begin{equation}
\label{eq:app-quadratic-square-H}
H
=
-\Delta
+
\frac14|\nabla V|^2
-
\frac12\Delta V,
\qquad
\phi=Z^{-1/2}e^{-V/2}.
\end{equation}
Then $H\phi=0$ and $H\geq 0$ by the ground-state transform. We show that $0$ is isolated.

Let $d\mu=Z^{-1}e^{-V}dx$. 
For any $\chi=h\phi$, the ground-state transform gives the Dirichlet identity
\begin{equation}
\label{eq:app-dirichlet-identity}
\langle \chi,H\chi\rangle_{L^2(dx)}
=
\int_{\mathbb R^N}|\nabla h|^2\,d\mu.
\end{equation}
Also,
\begin{equation}
\label{eq:app-orthogonality-mean-zero}
\chi\perp \phi
\quad\Longleftrightarrow\quad
\int h\,d\mu=0.
\end{equation}
Thus a Poincaré inequality for $\mu$ implies a spectral gap for $H$.

We now indicate why $\mu$ satisfies such an inequality. Consider the reversible Langevin generator
\begin{equation}
\label{eq:app-langevin-generator}
\mathcal L
=
\Delta-\nabla V\cdot\nabla.
\end{equation}
For sufficiently small $\alpha>0$, set
\begin{equation}
\label{eq:app-lyapunov-Valpha}
V_\alpha(x)=\|x\|^2+\alpha f(x)^2.
\end{equation}
A polynomial coercivity estimate gives constants $c,C>0$ such that
\begin{equation}
\label{eq:app-coercivity-quadratic-square}
\nabla V\cdot\nabla V_\alpha-\Delta V_\alpha
\geq
c\left(
\|x\|^2+f(x)^2|\nabla f(x)|^2
\right)
-C.
\end{equation}
This estimate follows by writing $x=r\theta$ and comparing the leading powers of $r$ along each direction $\theta\in S^{N-1}$. The positive terms either grow like $r^6$, $r^4$, or are controlled by the confining term $\|x\|^2$.

Let
\begin{equation}
\label{eq:app-lyapunov-function}
W(x)=e^{\varepsilon V_\alpha(x)}
\end{equation}
with $\varepsilon>0$ sufficiently small. Then
\begin{equation}
\label{eq:app-lyapunov-drift}
\frac{\mathcal LW}{W}
=
\varepsilon\Delta V_\alpha
+
\varepsilon^2|\nabla V_\alpha|^2
-
\varepsilon\nabla V\cdot\nabla V_\alpha.
\end{equation}
Using \Cref{eq:app-coercivity-quadratic-square} and choosing $\varepsilon$ small enough yields a Lyapunov drift condition:
\begin{equation}
\label{eq:app-lyapunov-condition}
\mathcal LW
\leq
-\kappa W+B\mathbf 1_{B_R}
\end{equation}
for some $\kappa,B,R>0$. Since $e^{-V}$ is smooth and bounded above and below on $B_R$, the measure $\mu$ satisfies a local Poincaré inequality on $B_R$. 
The Lyapunov criterion for Poincaré inequalities then implies a global Poincaré inequality:
\begin{equation}
\label{eq:app-global-poincare}
\int
\left(
h-\int h\,d\mu
\right)^2
d\mu
\leq
C_P
\int |\nabla h|^2\,d\mu.
\end{equation}
See~\cite{bakry2008simple-a43, cattiaux2013poincar-8bd}.

Now take $\chi\perp\phi$ and write $\chi=h\phi$. By \Cref{eq:app-orthogonality-mean-zero}, $\int h\,d\mu=0$. Combining \Cref{eq:app-dirichlet-identity,eq:app-global-poincare} gives
\begin{equation}
\label{eq:app-gap-from-poincare}
\|\chi\|_{L^2(dx)}^2
=
\int h^2\,d\mu
\leq
C_P
\int|\nabla h|^2\,d\mu
=
C_P
\langle\chi,H\chi\rangle.
\end{equation}
Hence
\begin{equation}
\label{eq:app-quadratic-square-gap}
\langle\chi,H\chi\rangle
\geq
C_P^{-1}\|\chi\|^2,
\qquad
\chi\perp\phi.
\end{equation}
Therefore
\begin{equation}
\label{eq:app-quadratic-square-spectrum}
\sigma(H)\subset\{0\}\cup[C_P^{-1},\infty).
\end{equation}
Finally, if $\chi\in\ker H$ and $\chi=h\phi$, then \Cref{eq:app-dirichlet-identity} gives $\nabla h=0$ $\mu$-almost everywhere. Since $\mu$ is equivalent to Lebesgue measure on the connected space $\mathbb R^N$, $h$ is constant. Thus
\begin{equation}
\label{eq:app-quadratic-square-kernel}
\ker H=\operatorname{span}\{\phi\}.
\end{equation}
This proves existence of a positive gap for each fixed instance. The argument does not give an inverse-polynomial lower bound on $C_P^{-1}$ as the dimension or the instance parameters vary.

\subsection{Quartic multiwell potentials and Ising hardness}
\label{app:ising-hardness-quartic-potentials}

We explain the hardness statement used in \Cref{subsec:spectral-gap-hardness}. The point is not that every quartic inverse problem is hard. Rather, the class of quartic multiwell potentials is expressive enough to approximate discrete spin systems.

Let $G$ be an Ising instance with spins $s_i\in\{\pm1\}$ and energy
\begin{equation}
\label{eq:app-ising-energy}
E_G(s)
=
-\sum_{(i,j)}J_{ij}s_is_j
-
\sum_i h_i s_i.
\end{equation}
Consider the quartic potential
\begin{equation}
\label{eq:app-ising-quartic-potential}
V_{G,\kappa}(x)
=
\|x\|^2
+
\kappa\sum_{i=1}^n(x_i^2-1)^2
+
\beta E_G(x),
\end{equation}
where $E_G(x)$ is the same polynomial with $s_i$ replaced by $x_i$. The quadratic prior term $\|x\|^2$ is constant on the hypercube vertices and therefore does not change the relative spin weights.

For large $\kappa$, the term $\kappa\sum_i(x_i^2-1)^2$ creates $2^n$ wells near the vertices $s\in\{\pm1\}^n$. Laplace approximation gives
\begin{equation}
\label{eq:app-ising-partition-approx}
\int_{\mathbb R^n}e^{-V_{G,\kappa}(x)}dx
=
C_{\kappa,n}e^{-n}
\sum_{s\in\{\pm1\}^n}e^{-\beta E_G(s)}
\left(1+o_{\kappa}(1)\right),
\end{equation}
where the factor $C_{\kappa,n}$ is independent of the spin configuration when the wells are symmetric. Likewise, for observables $f$ that vary slowly within each well,
\begin{equation}
\label{eq:app-ising-observable-approx}
\frac{\int f(x)e^{-V_{G,\kappa}(x)}dx}
{\int e^{-V_{G,\kappa}(x)}dx}
=
\frac{\sum_s f(s)e^{-\beta E_G(s)}}
{\sum_s e^{-\beta E_G(s)}}
+o_{\kappa}(1).
\end{equation}
Therefore ground-state observables of the continuous Hamiltonian approximate Ising Gibbs expectations, because
$$
\phi_{G,\kappa}(x)=Z^{-1/2}e^{-V_{G,\kappa}(x)/2}
$$
satisfies
\begin{equation}
\label{eq:app-ground-state-gibbs-expectation}
\langle \phi_{G,\kappa},f(\hat q)\phi_{G,\kappa}\rangle
=
\frac{\int f(x)e^{-V_{G,\kappa}(x)}dx}
{\int e^{-V_{G,\kappa}(x)}dx}.
\end{equation}

This construction is compatible with the inverse-problem form. The pinning wells arise from quadratic residuals $g_i(x)=x_i^2$ with observations near $1$, giving terms $(x_i^2-1)^2$. Pair interactions arise from quadratic residuals $g_{ij}(x)=x_ix_j$ with nonzero observations: expanding $(x_ix_j-y_{ij})^2$ gives a term $-2y_{ij}x_ix_j$ plus quartic terms that are approximately constant inside the wells. Linear fields are produced similarly from affine residuals. Thus quadratic forward maps can generate quartic multiwell posterior landscapes whose low-temperature Gibbs weights approximate an Ising model.

Exact evaluation of Ising and related Tutte partition functions is $\#$P-hard in broad parameter regimes, and approximation with fields can also be hard depending on the parameter regime~\cite{jaeger1990computational-6cf,goldberg2006complexity-d2d}. Consequently, one should not expect a general polynomial-time algorithm for arbitrary quartic posterior landscapes. This hardness statement is compatible with the spectral-gap existence result in \Cref{app:quadratic-square-spectral-gap}: the existence theorem gives $\gamma>0$ for fixed instances, but does not provide an inverse-polynomial lower bound. In multiwell regimes, the gap can be exponentially small because of metastability between well-separated explanations of the data.

\subsection{Uniform spectral gap from a Hessian lower bound}
\label{app:hessian-lower-bound-gap}

We prove the spectral-gap statement used in \Cref{subsec:spectral-gap-hardness}.
Let
$$
V_\tau(x):=\|x\|^2+\tau F(x),
\qquad
0\le \tau\le 1,
$$
and define
\begin{equation}
H_\tau
:=
-\Delta
+
\frac14|\nabla V_\tau|^2
-
\frac12\Delta V_\tau .
\label{eq:interpolating-witten-hamiltonian}
\end{equation}
Assume that there exists $\beta<2$ such that
\begin{equation}
\nabla^2F(x)\succeq -\beta I
\qquad
\text{for all }x\in\mathbb R^N .
\label{eq:hessian-lower-bound-F}
\end{equation}
Set
$$
\kappa:=2-\beta>0.
$$
Then, for all $\tau\in[0,1]$,
\begin{equation}
\nabla^2V_\tau(x)
=
2I+\tau\nabla^2F(x)
\succeq
(2-\beta\tau)I
\succeq
\kappa I .
\label{eq:uniform-convexity-Vtau}
\end{equation}
First, this implies that $V_\tau$ is uniformly confining. Indeed, fixing $x\in\mathbb R^N$ and applying Taylor's formula to
$g(t):=V_\tau(tx)$ gives
$$
V_\tau(x)
\ge
V_\tau(0)+\nabla V_\tau(0)\cdot x+\frac{\kappa}{2}\|x\|^2.
$$
Thus
$$
Z_\tau:=\int_{\mathbb R^N}e^{-V_\tau(x)}\,dx<+\infty.
$$
Define
$$
d\mu_\tau(x):=Z_\tau^{-1}e^{-V_\tau(x)}\,dx,
\qquad
\phi_\tau:=Z_\tau^{-1/2}e^{-V_\tau/2}.
$$
We now record the ground-state transform identity. Let
$$
D_j(\tau):=\partial_{x_j}+\frac12\partial_{x_j}V_\tau .
$$
Then
$$
D_j(\tau)\phi_\tau=0,
$$
and
\begin{equation}
H_\tau=\sum_{j=1}^N D_j(\tau)^\dagger D_j(\tau).
\label{eq:Htau-factorization}
\end{equation}
Consequently, for any smooth $h$ we define $\chi:=h\phi_\tau$,
$$
D_j(\tau)(h\phi_\tau) = (\partial_{x_j}h)\phi_\tau+hD_j(\tau)\phi_\tau = (\partial_{x_j}h)\phi_\tau.
$$
Defining $d\mu_\tau(x) := \phi^2_{\tau}(x) dx$
\begin{equation}
\langle \chi,H_\tau\chi\rangle_{L^2(dx)}
= \sum_{j=1}^N \|D_j(\tau)\chi\|^2
=
\int_{\mathbb R^N}|\nabla h(x)|^2\,d\mu_\tau(x).
\label{eq:dirichlet-identity-hessian-gap}
\end{equation}
Moreover if $\chi$ is orthogonal to $\phi_\tau$ then it has zero mean in the $\mu_\tau$ measure,
\begin{equation}
\chi\perp\phi_\tau
\quad\Longleftrightarrow\quad
 0 =\langle \chi,\phi_\tau\rangle_{L^2(dx)}=
\int_{\mathbb R^N}h\,d\mu_\tau .
\label{eq:orthogonality-zero-mean}
\end{equation}
We use the standard Poincaré inequality for uniformly strongly convex Gibbs measures: if
$\nabla^2V_\tau\succeq \kappa I$, then
\begin{equation}
\int_{\mathbb R^N}
\left|
h-\int h\,d\mu_\tau
\right|^2
d\mu_\tau
\le
\frac1\kappa
\int_{\mathbb R^N}|\nabla h|^2\,d\mu_\tau .
\label{eq:strong-convexity-poincare}
\end{equation}
Applying this to a function $h$ with zero $\mu_\tau$-mean gives
$$
\int |h|^2\,d\mu_\tau
\le
\frac1\kappa
\int |\nabla h|^2\,d\mu_\tau .
$$
Therefore, if $\chi=h\phi_\tau$ and $\chi\perp\phi_\tau$, then
\begin{equation}
\|\chi\|^2
=
\int |h|^2\,d\mu_\tau
\le
\frac1\kappa
\int |\nabla h|^2\,d\mu_\tau
=
\frac1\kappa
\langle \chi,H_\tau\chi\rangle .
\label{eq:gap-rayleigh-hessian}
\end{equation}
Equivalently,
\begin{equation}
\langle \chi,H_\tau\chi\rangle
\ge
\kappa\|\chi\|^2,
\qquad
\chi\perp\phi_\tau .
\label{eq:hessian-gap-rayleigh-final}
\end{equation}
Thus
\begin{equation}
\sigma(H_\tau)\subset \{0\}\cup[\kappa,\infty),
\qquad
\gamma(H_\tau)\ge \kappa=2-\beta .
\label{eq:uniform-hessian-gap-final}
\end{equation}
Finally, the ground state is unique. If $H_\tau\chi=0$ and $\chi=h\phi_\tau$, then
\eqref{eq:dirichlet-identity-hessian-gap} gives
$$
\int |\nabla h|^2\,d\mu_\tau=0.
$$
Hence $h$ is constant $\mu_\tau$-almost everywhere, and since $\mu_\tau$ is equivalent to Lebesgue measure on the connected space $\mathbb R^N$,
$$
\ker H_\tau=\mathrm{span}\{\phi_\tau\}.
$$

\end{document}